\theoremstyle{plain}
\newtheorem{theorem}{Theorem}
\newtheorem{proposition}{Proposition}
\theoremstyle{definition}
\newtheorem*{definition}{Definition}
\newtheorem*{remark}{Remark}
\DeclareMathOperator{\re}{Re}
\DeclareMathOperator{\im}{Im}
\DeclareMathOperator{\sn}{sn}
\DeclareMathOperator{\artanh}{artanh}
\newcommand{\Li}[1][2]{\operatorname{Li}_{#1}} 
\newcommand{\cG}{{\mathcal G}}
\newcommand{\mC}{{\mathds C}}
\newcommand{\mR}{{\mathds R}}
\newcommand{\mZ}{{\mathds Z}}
\begin{document}

\title{On discrete integrable equations with convex variational principles}

\author{Alexander I.~Bobenko\footnote{Institut f\"ur Mathematik, MA 8-3, Technische
Universit\"at Berlin, Stra{\ss}e des 17. Juni 136, 10623 Berlin, Germany.}$\;^{,1}$ \and Felix G\"unther\footnotemark[1]$\;^{,2}$}

\date{}
\maketitle

\footnotetext[1]{Supported by the DFG Research Unit ``Polyhedral Surfaces'' and the SFB/TR 109 ``Discretization in geometry and dynamics''. E-mail: bobenko@math.tu-berlin.de}
\footnotetext[2]{Supported by the DFG Research Unit ``Polyhedral Surfaces'' and the Deutsche Telekom Stiftung. E-mail: fguenth@math.tu-berlin.de}

\begin{abstract}
\noindent
We investigate the variational structure of discrete Laplace-type equations that are motivated by discrete integrable quad-equations. In particular, we explain why the reality conditions we consider should be all that are reasonable, and we derive sufficient conditions (that are often necessary) on the labeling of the edges under which the corresponding generalized discrete action functional is convex. Convexity is an essential tool to discuss existence and uniqueness of solutions to Dirichlet boundary value problems. Furthermore, we study which combinatorial data allow convex action functionals of discrete Laplace-type equations that are actually induced by discrete integrable quad-equations, and we present how the equations and functionals corresponding to $(\textnormal{Q}3)$ are related to circle patterns.\\ \vspace{0.5ex}

\noindent
\textbf{2010 Mathematics Subject Classification:} 37J35; 52C26; 70S05.\\ \vspace{0.5ex}

\noindent
\textbf{Keywords:} discrete integrable quad-equations, discrete Laplace-type equations, Lagrangian formalism, variational principle, Dirichlet boundary value problem, circle patterns.
\end{abstract}

\raggedbottom
\setlength{\parindent}{0pt}
\setlength{\parskip}{1ex}


\section{Introduction}\label{sec:intro}

Discrete integrable systems on quad-graphs serve as discretizations of integrable partial differential equations with a two-dimensional space-time, as suggested by the first author and Suris in \cite{BS02}. They identified integrability of such systems with their multi-dimensional consistency, as did also Nijhoff in \cite{N02}. This property was used by the first author, Adler, and Suris in \cite{ABS03} to classify integrable systems on quad-graphs, resulting in the now famous ABS list of discrete integrable equations.

Multi-dimensional consistent quad-equations can be consistently imposed on all two-dimensional sublattices of $\mZ^d$; i.e., the corresponding multidimensional system possesses solutions such that its restrictions to two-dimensional quad-surfaces are solutions of the original two-dimensional system. In \cite{ABS03}, the first author, Adler, and Suris described the variational (Lagrangian) formulation of such systems. Being more precise, they showed that solutions of some discrete integrable quad-equations of the ABS list on a quad-surface $\Sigma$ are critical points of a certain discrete action functional $S=\int_\Sigma \mathcal{L}$, where $\mathcal{L}$ is a suitable discrete two-form on $\Sigma$, called the Lagrangian. Lobb and Nijhoff found out in \cite{LN09} that the discrete action functional takes the same value if the underlying surface is changed only locally, meaning that $\mathcal{L}$ is closed. The first author and Suris extended these results to all equations from the ABS list and gave a more conceptual proof in \cite{BS10}. Variational principles are a very powerful tool for numerical simulations of problems in classical mechanics. For a presentation of discrete mechanics along with a discrete variational principle and numerical applications, we refer to the paper \cite{MaW97} of Marsden and Wendlandt.

Following the papers \cite{BoSu14} of the first author and Suris, and \cite{BPSu14} of Boll, Petrera, and Suris, the idea of Lobb and Nijhoff can be equivalently stated as follows: The solutions of discrete integrable systems give critical points simultaneously for discrete action functionals along all possible two-dimensional quad-surfaces of $\mZ^d$, and the Lagrangian is closed on solutions. This interpretation corresponds to the classical theory of pluriharmonic functions. A pluriharmonic function is defined as a real-valued function of several complex variables that minimizes the Dirichlet energy along any holomorphic curve in its domain. The relation to pluriharmonic functions motivated the first author and Suris to introduce pluri-Lagrangian problems: Given a $k$-dimensional Lagrangian $\mathcal{L}$ in $d$-dimensional space that depends on a sought-after function $u$ of $k$ variables, one looks for functions $u$ that deliver critical points to $S=\int_\Sigma \mathcal{L}$ on a $k$-dimensional surface $\Sigma$. They claimed that integrability of variational systems should be understood as the existence of the pluri-Lagrangian structure.

A general theory of one-dimensional pluri-Lagrangian systems was developed by Suris in \cite{Su13}; two-dimensional systems were discussed by Boll, Petrera, and Suris in \cite{BPSu14}. They identified the main building blocks of the discrete Euler-Lagrange equations of the pluri-Lagrangian systems as the corresponding equations at the quad-surface consisting of three elementary squares adjacent to a vertex of valence three, so-called 3D-corner equations. They discussed the notion of consistency of the system of 3D-corner equations and analyzed it for a special class of three-point two-forms motivated by discrete integrable quad-equations of the ABS list.

Note that quantization of discrete integrable systems on quad-graphs yields solvable lattice models. Here, the consistency principle corresponds to the quantum Yang-Baxter equation. Classical discrete integrable systems on quad-graphs are then recovered in the quasi-classical limit. The corresponding action functional is derived as a quasi-classical limit of the partition function of the corresponding integrable quantum model (the Lagrangians being the quasi-classical limit of the Boltzmann weights). A quantization of circle patterns, with the corresponding quantization of the action functional introduced by the first author and Springborn in \cite{BSp04}, was carried out by Bazhanov, Mangazeev, and Sergeev in \cite{BaMSe07}. The quantum ``master solution'' that serves as a quantization of the discrete Laplace-type equation corresponding to the $(\textnormal{Q}4)$-system was recently introduced by Bazhanov and Sergeev in \cite{BaSe10,BaSe11}. Functionals obtained in the quasi-classical limit are related to the variational problems considered in this paper.

Our setting introduced in Section~\ref{sec:description} is a finite bipartite quad-graph embedded in an oriented surface, on which we consider discrete Laplace-type equations that are motivated from discrete integrable quad-equations of the ABS list. We explain why the reality conditions, i.e., restrictions of the variables and parameters to certain one-dimensional complex subspaces, should be actually all that are reasonable, and we compute the corresponding discrete action functionals. Our variational description turns out to be slightly more general than the previous formulation of the first author and Suris in \cite{BS10}, because we include boundary terms yielding a nonintegrable generalization.

The generalized discrete action functionals are studied in Section~\ref{sec:action}. We are particularly interested in conditions on the parameters of the quad-equations under which the corresponding generalized discrete action functional is strictly convex (or strictly concave). For the generalized discrete action functionals corresponding to any integrable quad-equation except $(\textnormal{Q}4)$, we give necessary and sufficient conditions for strict convexity; in the case of $(\textnormal{Q}4)$, we find at least sufficient conditions. These results are summarized in Section~\ref{sec:main}.

Strict convexity implies the uniqueness of solutions to Dirichlet boundary value problems and helps to investigate their existence in Section~\ref{sec:Q_Dirichlet}. Moreover, minimization of the corresponding functional is an effective tool to construct the corresponding solution numerically. For example, Stephenson's program \verb|circlepack| constructs circle packings by a method of Thurston that can be interpreted as a particular method to minimize the action functional of circle patterns that was derived by the first author and Springborn in \cite{BSp04}. By relating the discrete generalized action functionals corresponding to discrete Laplace-type equations of type $(\textnormal{Q}3)$ to the circle pattern functionals of the first author and Springborn, we can adapt their existence results to our setting. In addition, we prove that the Dirichlet boundary value problems in the case of $(\textnormal{Q}4)$ with rectangular or rhombic lattices can be uniquely solved.

Having identified situations under which the generalized discrete action functional is strictly convex, we investigate in Section~\ref{sec:integrable_cases} when discrete integrable equations may lead to convex variational principles. We discuss the influence of combinatorial data, but our presentation is far from being complete.

Finally, we present a geometric interpretation of the solutions to the (generalized) discrete Laplace-type equations corresponding to $(\textnormal{Q}3)$ as Euclidean and hyperbolic circle patterns with conical singularities in Section~\ref{sec:circle_patterns}. Note that the convexity of the functionals plays a crucial role in the variational description of circle patterns by the first author and Springborn in \cite{BSp04}. Our geometric interpretation of $(\textnormal{Q}3)$ turns out to be closely connected to the paper \cite{BoMeSu05} of the first author, Mercat, and Suris. In their paper, they discussed integrable circle patterns and discrete integrability of a system of cross-ratio equations. Their notions of integrability fit well to ours in terms of the parameters of the discrete Laplace-type equations.


\section{Discrete Laplace-type integrable equations}\label{sec:description}

In Section~\ref{sec:quadeq}, we first introduce basic definitions and notations we will use in the sequel and comment on the discrete integrable quad-equations we are investigating. How these equations induce discrete Laplace-type equations is the topic of Section~\ref{sec:Laplace}, as well as their variational description. In the end, we derive the reality conditions discussed later in Section~\ref{sec:action} and explain why they are most likely the only reasonable conditions to that methods of real variational calculus can be applied.


\subsection{Discrete integrable quad-equations}\label{sec:quadeq}

We consider a finite connected strongly regular bipartite \textit{quad-graph} $\cG$ embedded in an oriented surface (mainly the plane) such that $0$-cells correspond to vertices $V(\cG)$, $1$-cells to edges $E(\cG)$, and $2$-cells to quadrilateral faces $F(\cG)$. We refer to the maximal independent sets of vertices of $\cG$ as \textit{white} and \textit{black} vertices. The assumption of strong regularity asserts that the boundary of a quadrilateral contains a particular vertex or a particular edge at most once and that two different edges or faces have at most one vertex or edge in common, respectively. As a consequence, if a line segment connecting two vertices of $\cG$ is the (possibly outer) diagonal of a quadrilateral, there is just one such face of $\cG$. Let $\cG_W$ and $\cG_B$ be the graphs defined on the white and black vertices where the edges are exactly corresponding diagonals of faces of $\cG$. We will always assume that $\cG_W$ is connected and contains at least two vertices.

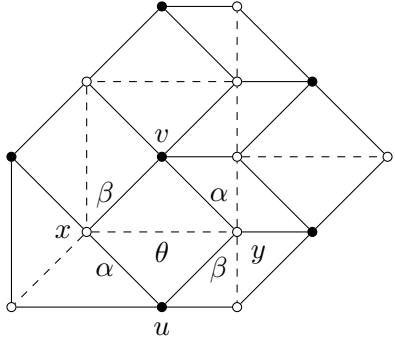
\begin{figure}[htbp]
\begin{center}
\beginpgfgraphicnamed{quad}
\begin{tikzpicture}
[white/.style={circle,draw=black,fill=white,thin,inner sep=0pt,minimum size=1.2mm},
black/.style={circle,draw=black,fill=black,thin,inner sep=0pt,minimum size=1.2mm}]
\node[white] (w1) [label=left:$x$]
at (-1,-1) {};
\node[white] (w2) [label=below right:$y$]
 at (1,-1) {};
\node[white] (w3)
 at (1,0) {};
\node[white] (w4)
 at (1,1) {};
\node[white] (w5)
 at (-1,1) {};
\node[white] (w6)
 at (-2,-2) {};
\node[white] (w7)
 at (1,-2) {};
\node[white] (w8)
 at (3,0) {};
\node[white] (w9)
 at (1,2) {};
\node[black] (b1) [label=above:$v$]
 at (0,0) {};
\node[black] (b2)
 at (2,1) {};
\node[black] (b3)
 at (2,-1) {};
\node[black] (b4) [label=below:$u$]
 at (0,-2) {};
\node[black] (b5)
 at (-2,0) {};
\node[black] (b6)
 at (0,2) {};
\draw[dashed] (w1) --node[midway,below] {$\theta$} (w2) --  (w3) --  (w4) --  (w5) --  (w1);
\draw[dashed] (w1) --  (w6);
\draw[dashed] (w2) --  (w7);
\draw[dashed] (w3) --  (w8);
\draw[dashed] (w4) --  (w9);
\draw (b2) -- (w9) -- (b6) -- (w5) -- (b5) -- (w6) -- (b4) -- (w7) -- (b3) -- (w8) -- (b2);
\draw (b5) -- (w1) --node[midway,left] {$\alpha$}  (b4) --node[midway,right] {$\beta$}  (w2) -- (b3) -- (w3) -- (b2) -- (w4) -- (b6);
\draw (b1) --node[midway,left] {$\beta$}  (w1);
\draw (b1) --node[midway,right] {$\alpha$}  (w2);
\draw (b1) -- (w3);
\draw (b1) -- (w4);
\draw (b1) -- (w5);
\end{tikzpicture}
\endpgfgraphicnamed
\caption{Bipartite quad-graph with white graph (dashed lines)}
\label{fig:white_subgraph}
\end{center}
\end{figure}

On the set $E(\cG)$ of edges of $\cG$, we will investigate labelings with complex numbers such that opposite edges of a quadrilateral receive the same label. These labels will be the parameters of the quad-equations described below. For short, only such labelings are actually said to be \textit{labelings} of $E(\cG)$. In the notation of Figure~\ref{fig:white_subgraph}, any such labeling induces a labeling of $E(\cG_W)$ by $\theta:=\alpha-\beta$ that we say to be an \textit{induced labeling}. Conversely, not any labeling of $E(\cG_W)$ is induced by some labeling of $E(\cG)$.

\begin{definition}\label{def:labeling}
A labeling of $E(\cG_W)$ is said to be \textit{integrable} if it is induced by some labeling of $E(\cG)$.
\end{definition}

Given a quad-graph $\cG$ with a labeling of edges, we consider a collection of equations on each face of $\cG$ of the type
\begin{equation}\label{eq:Q}
 Q(x,u,y,v;\alpha,\beta)= 0,
\end{equation}
where $\alpha$ and $\beta$ are associated to the edges of the quadrilateral as in Figure~\ref{fig:white_subgraph} and $x,u,y,v$ are complex variables assigned to the four vertices. Note that $x$ has the meaning of a vertex as well as of the field variable associated to it, but the meaning will be clear from the context.

Here, $Q$ shall be affine linear in each variable and Equation~(\ref{eq:Q}) is assumed to be invariant under the symmetry group of the square. Integrability of this quad-equation is identified by its three-dimensional consistency, see Figure~\ref{fig:3D}. This means that the equation can be consistently imposed on the six faces of the cube, i.e., values $x,x_1,x_2,x_3$ uniquely define $x_{123}$. By assumption, $x,x_1,x_2,x_3$ uniquely determine $x_{12},x_{23},x_{13}$, but the three faces incident to $x_{123}$ yield three a priori different values for $x_{123}$. Three-dimensional consistency assures that these three values agree.

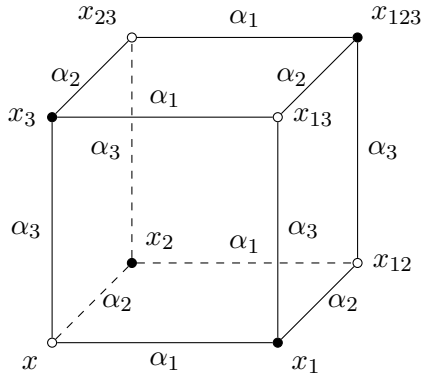
\begin{figure}[htbp]
\begin{center}
\beginpgfgraphicnamed{3D}
\begin{tikzpicture}
[black/.style={circle,draw=black,fill=white,thin,inner sep=0pt,minimum size=1.2mm},
white/.style={circle,draw=black,fill=black,thin,inner sep=0pt,minimum size=1.2mm},scale=1.5]
\node[black] (x) [label=below left:$x$] at (0,0) {};
\node[white] (x1) [label=below right:$x_1$] at (2,0) {};
\node[black] (x13) [label=right:$x_{13}$] at (2,2) {};
\node[white] (x3) [label=left:$x_3$] at (0,2) {};
\node[white] (x2) [label=above right:$x_2$] at (0.707,0.707) {};
\node[black] (x12) [label=right:$x_{12}$] at (2.707,0.707) {};
\node[white] (x123) [label=above right:$x_{123}$] at ((2.707,2.707) {};
\node[black] (x23) [label=above left:$x_{23}$] at (0.707,2.707) {};
\draw (x) -- node[midway,below] {$\alpha_1$} (x1) -- node[midway,right] {$\alpha_3$} (x13)-- node[midway,above] {$\alpha_1$} (x3)-- node[midway,left] {$\alpha_3$} (x);
\draw (x1) -- node[midway,right] {$\alpha_2$} (x12) -- node[midway,right] {$\alpha_3$} (x123)-- node[midway,left] {$\alpha_2$} (x13);
\draw (x123) -- node[midway,above] {$\alpha_1$} (x23) -- node[midway,left] {$\alpha_2$} (x3);
\draw[dashed] (x) -- node[midway,right] {$\alpha_2$} (x2) -- node[midway,above] {$\alpha_1$} (x12);
\draw[dashed] (x2) -- node[midway,left] {$\alpha_3$} (x23);
\end{tikzpicture}
\endpgfgraphicnamed
\caption{Three-dimensional consistency}
\label{fig:3D}
\end{center}
\end{figure}

In \cite{ABS03}, the first author, Adler, and Suris classified all these integrable equations under one additional assumption, called the tetrahedron property. This property demands that $x_{123}$ depends on $x_1,x_2,x_3$, but not on $x$. They have showed that up to M\"obius transformations of the field variables any such equation is equivalent to one of nine canonical equations. These nine equations are listed in the so-called ABS list. The ABS list consists of a list Q containing four equations, a list H of three, and a list A of two equations. As a result of their subsequent paper \cite{ABS09}, the tetrahedron property can be replaced by certain nondegeneracy assumptions, just giving the list Q.

In our setting, we assume that Equation~(\ref{eq:Q}) is an equation of the ABS list. It turns out that for the investigation of discrete Laplace-type equations, one can restrict to the nondegenerate equations of the list Q.

\begin{remark}
Let us shortly comment how our definition of integrability is motivated and how it is related to other uses. First, the way how a labeling of $E(\cG)$ induces a labeling of $E(\cG_W)$ is kind of a discrete derivative. In this sense, integrable labelings of $E(\cG_W)$ can be integrated to labelings of $E(\cG)$. But we are mainly interested in the connection to discrete integrable quad-equations. In the light of discrete Laplace-type equations and generalized discrete action functionals that we introduce in Section~\ref{sec:Laplace}, any labeling of $E(\cG_W)$ defines a discrete Laplace-type equation, but only an integrable labeling gives a discrete Laplace-type equation that actually comes from a discrete integrable quad-equation.

Note that integrability of $Q$, i.e., three-dimensional consistency, is related to admitting a discrete zero curvature representation, as was discussed by the first author and Suris in \cite{BS02} and by Nijhoff in \cite{N02}. In the case of circle patterns in Section~\ref{sec:circle_patterns}, our definition of integrability was denoted by \textit{integrability of the corresponding cross-ratio system} in the first author's and Suris' book \cite{BS08}. On the other hand, their definition of integrable circle patterns corresponds to isoradial realizations of them.
\end{remark}


\subsection{Discrete Laplace-type integrable equations and their variational structure}\label{sec:Laplace}

The first author, Adler, and Suris showed that each equation in the ABS list possesses a three-leg form \cite{ABS03}, which is additive for $(\textnormal{Q}1)_{\delta=0}$, $(\textnormal{H}1)$, and $(\textnormal{A}1)_{\delta=0}$ and multiplicative for the other equations. Here, an \textit{additive three-leg form} of Equation~(\ref{eq:Q}) (centered at $x$) is an equivalent equation
\begin{equation*}
 i\psi(x,u;\alpha) - i\psi(x,v;\beta)= i\varphi(x,y;\alpha-\beta)=i\varphi(x,y;\theta)
\end{equation*}
for some functions $\psi$ and $\varphi$ called (additive) \textit{three-leg functions}. If Equation~(\ref{eq:Q}) is equivalent to
\begin{equation*}
 \frac{\Psi(x,u;\alpha)}{\Psi(x,v;\beta)}= \Phi(x,y;\alpha,\beta)=\Phi(x,y;\theta)
\end{equation*}
for some functions $\Psi$ and $\Phi$, Equation~(\ref{eq:Q}) is said to possess a \textit{multiplicative three-leg form}, and $\Psi$ and $\Phi$ are called (multiplicative) \textit{three-leg functions}. Clearly, $i\psi=\log \Psi$, $i\varphi=\log \Phi$ induce an additive three-leg form modulo $2\pi$.

\begin{definition}
The functions $\varphi$ and $\Phi$ appearing in the additive or multiplicative three-leg form, respectively, are called \textit{long leg functions}; $\psi$ and $\Psi$ are called \textit{short leg functions}.
\end{definition}

\begin{remark}
Note that in most formulations of the three-leg forms, the factor of $i$ does not enter. The reason why we add it here is that we want to get real functionals in Section~\ref{sec:action}. For the reality conditions discussed in Section~\ref{sec:reality}, $i\varphi$ turns out to be (generally) a real function, leading to a real generalized discrete action functional later. Not taking the factor $i$ into account would necessitate considering the imaginary part of the functionals.
\end{remark}

For the discrete Laplace-type equations we define below, only the long leg functions are important. Since each equation from the lists A and H shares its long leg function with some equation from the list Q, we may restrict our attention to the latter. Note that $\psi=\varphi$ and $\Psi=\Phi$ for all equations from the list Q. In the appendix, we provide a list of discrete integrable quad-equations of the list Q together with their three-leg functions. There, we slightly adapt the formulation of the quad-equations to our setting, but mainly follow the presentation of the first author and Suris in the appendix of their paper \cite{BS10}. For the quad-equation $(\textnormal{Q}3)_{\delta=0}$, we even give two three-leg functions, leading to equivalent three-leg forms. The reason for that is that under one reality condition, we see a remarkable analogy to Euclidean circle patterns if we not take the three-leg function as it comes up in the work \cite{BS10} of the first author and Suris. 

In the original paper \cite{ABS03} of the first author, Adler, and Suris, the notation of equation $(\textnormal{Q}4)$ used Weierstra{\ss} elliptic functions. In the end of Section~\ref{sec:Q4}, we will use this formulation to investigate rhombic period lattices. But the Jacobian formulation discovered by Hietarinta in \cite{Hi05} seems to be more appropriate for rectangular period lattices and will be used in this chapter. In particular, the limit of $(\textnormal{Q}4)$ to $(\textnormal{Q}3)_{\delta=1}$ discussed in Section~\ref{sec:limit} becomes clearer.

The restriction of any solution of the system of discrete integrable equations on the quad-graph $\cG$ to the white graph $\cG_W$ satisfies for each inner white vertex $x$ the \textit{additive discrete Laplace equation}
\begin{equation}\label{eq:Laplace_Q10}
 \sum\limits_{e=(x,y_k)\in E(\cG_W)} \varphi(x,y_k;\theta_e)= 0
\end{equation}
in the case of $(\textnormal{Q}1)_{\delta=0}$ and the \textit{multiplicative discrete Laplace equation}
\begin{equation}\label{eq:Laplace_mult}
 \prod\limits_{e=(x,y_k)\in E(\cG_W)} \Phi(x,y_k;\theta_e)= 1
 \end{equation}
in the case of the other Q-equations. Here, $\theta_e$ denotes the label induced on the white edge $e$.

Equation~(\ref{eq:Laplace_mult}) corresponds to the equivalent form
\begin{equation}\label{eq:Laplace_add}
 \sum\limits_{e=(x,y_k)\in E(\cG_W)} \varphi(x,y_k;\theta_e)\equiv 0 \mod 2\pi,
 \end{equation}
where $i\varphi=i\varphi(x,y;\theta)$ is some branch of $\log \Phi$. Now, Equation~(\ref{eq:Laplace_add}) is satisfied if and only if there exists a real number $\Theta_x \equiv 0 \mod 2\pi$ such that
\begin{equation}\label{eq:Laplace}
 \sum\limits_{e=(x,y_k)\in E(\cG_W)} \varphi(x,y_k;\theta_e)=\Theta_x
 \end{equation}
holds. The additive discrete Laplace Equation~(\ref{eq:Laplace_Q10}) is included in Equation~(\ref{eq:Laplace_add}) by taking $\Theta_x=0$.

Even in the case that $x$ is not an inner white vertex of $\cG$, we have a corresponding discrete Laplace equation~(\ref{eq:Laplace}), but the restriction $\Theta_x \equiv 0 \mod 2\pi$ does not have to be satisfied and $\Theta_x$ could be nonreal. However, the reality conditions in Section~\ref{sec:reality} imply $\im \varphi \equiv 0$, so $\Theta_x \in \mR$.

\begin{definition}
For a general labeling of $E(\cG_W)$ and a general choice of $\Theta$, Equation~(\ref{eq:Laplace}) is said to be a \textit{discrete Laplace-type equation}. If the labeling is integrable and $\Theta_x \equiv 0 \mod 2\pi$ (in the case of $(\textnormal{Q}1)_{\delta=0}$, $\Theta_x=0$) for all inner vertices $x \in V(\cG_W)$, Equation~(\ref{eq:Laplace}) is said to be a \textit{discrete Laplace-type integrable equation}.
\end{definition}

We name the discrete Laplace-type equations that are not induced by discrete integrable system still by their corresponding integrable cases $(\textnormal{Q}1)$, $(\textnormal{Q}2)$, $(\textnormal{Q}3)$, and $(\textnormal{Q}4)$. These generalized equations seem to be important. For example, the generalized discrete action functional of $(\textnormal{Q}3)$ describes general circle patterns with conical singularities, as we will see in Section~\ref{sec:circle_patterns}. However, the integrable case is nevertheless important and is investigated in Section~\ref{sec:integrable_cases}.

According to the first author, Adler, and Suris \cite{ABS03}, there exists for any equation from the ABS list a change of variables $x=f(X)$, $u=f(U)$, $y=f(Y)$, $v=f(V)$ such that $\varphi$ possesses a symmetric primitive $L=L(X,Y;\theta)$ in the new variables, i.e.,
\begin{equation}\label{eq:symmetry}
 L(X,Y;\theta)=L(Y,X;\theta),
 \end{equation}
\begin{equation}\label{eq:derivative}
 \frac{\partial}{\partial X} L(X,Y;\theta)=\varphi(x,y;\theta).
\end{equation}
Since any two branches of the logarithm differ by a multiple of $2\pi i$ only, $L$ exists for any choice $i\varphi$ of the branch of $\log \Phi$. For convenience, we will consider $\varphi$ as a function of $X,Y,\theta$ in the following, using the notation $\varphi(X,Y;\theta)$ instead of $\varphi(x,y;\theta)=\varphi(f(X),f(Y);\theta)$.

Using the symmetry of $L$, the discrete Laplace-type Equations~(\ref{eq:Laplace}) for white vertices $x$ are the discrete Euler-Lagrange equations for the \textit{generalized discrete action functional}
\begin{equation}\label{eq:action}
 S=\sum\limits_{e=(x,y)\in E(\cG_W)} L(X,Y;\theta_e) - \sum\limits_{x\in V(\cG_W)} \Theta_x X.
 \end{equation}
Thus, critical points of (\ref{eq:action}) correspond exactly to solutions of (\ref{eq:Laplace}). Note that other choices of $\varphi$ fulfilling $\exp(i\varphi)=\Phi$ lead to analogous results, but $\Theta$ may change. Hence, we can always restrict to some branch of $\log \Phi$ of our choice.

\begin{remark}
The presentations of the first author, Adler, and Suris in \cite{ABS03} and of the first author and Suris in \cite{BS10} did not include the extra terms $\Theta$. But these terms become important in our consideration, when we will restrict to real variables and extend $\varphi$ smoothly. Since the manifold of solutions of Equation~(\ref{eq:Q}) is not connected in real space, $\varphi$ can never be part of a global additive three-leg form unless we consider $(\textnormal{Q}1)_{\delta=0}$. In contrast, the manifold of solutions of Equation~(\ref{eq:Q}) is connected in projective complex space $(\mathds{CP}^1)^4$ and an additive three-leg form exists globally.
\end{remark}

The main class of examples for discrete pluri-Lagrangian problems in dimension two considered in the paper \cite{BPSu14} of Boll, Petrera, and Suris is given by a Lagrangian $\mathcal{L}$ of the form \[\mathcal{L}(x,u,y,v;\alpha,\beta)=L'(X,U;\alpha)-L'(X,V;\beta)-L(X,Y;\alpha,\beta)\] with some functions $L,L'$. The Langrangian structure of discrete integrable quad-equations discussed by Lobb and Nijhoff \cite{LN09} as well as by the first author and Suris \cite{BS10} essentially corresponds to such an $\mathcal{L}$, where $L$ is the primitive of $\varphi$ introduced above ($\theta=\alpha-\beta$) and $L'$ is a primitive of the short leg function $\psi$. To ensure the skew-symmetry of the Lagrangian, \[\mathcal{L}(X,U,Y,V;\alpha,\beta)=\mathcal{L}(X,V,Y,U;\beta,\alpha),\] $L(X,Y;\theta)=-L(Y,X;-\theta)$ has to hold. In our considerations below, $L(X,Y;\theta)=-L(X,Y;-\theta)$ is generally true. But in addition, $L(X,Y;\pi-\theta)=-L(X,Y;\pi+\theta)$ holds in some instances of $(\textnormal{Q}3)$. This corresponds to the skew-symmetry $L(X,Y;\theta^*)=-L(X,Y;(-\theta)^*)$ with $\theta^*=\pi-\theta$ that was used by Bazhanov, Mangazeev, and Sergeev in \cite{BaMSe07}. 


\subsection{Reality conditions allowing variational principles}\label{sec:reality}

In this section, we will derive \textit{reality conditions}, i.e., restrictions of the variables $X$, $Y$ and the labels on edges of $\cG$ to certain one-dimensional connected submanifolds of $\mC$, under which the generalized discrete action functional takes values on an one-dimensional affine real subspace of $\mC$. Only in such cases we can ask for convexity of $S$ and can apply the methods of real variational calculus. Furthermore, we will explain why we believe that there are no further reasonable reality conditions than the ones considered in Section~\ref{sec:action}.

If $S$ takes its values on a line $R$ in $\mC$, $L(\cdot,\cdot;\theta)$ and thus its derivative $\varphi(\cdot,\cdot;\theta)$ take their values on lines parallel to $R$. Since the functions $\varphi$ (or $\varphi=-i\log \Phi$) that are given in the appendix are smooth functions with a discrete set of singularities, we can locally invert $\varphi$ with respect to $\theta$. So for fixed generic $X$ and $Y$, $\theta$ lies on some one-dimensional submanifold of $\mC$, at least locally. Using that $\theta=\alpha-\beta$ for an induced labeling, the set of all $\alpha-\beta$ is locally a one-dimensional submanifold. But then, the domains of $\alpha$ and $\beta$ have to be lines parallel to each other. It is natural to demand that all labels have the same domain of definition, implying that the induced labels $\theta$ lie on a line through the origin.

In the cases of $(\textnormal{Q}1)$ and $(\textnormal{Q}3)_{\delta=0}$, the three-leg functions depend just on $X-Y$ and $\theta$. Applying the same reasoning as above, $X$ and $Y$ lie on lines parallel to each other.

To determine possible reality conditions for the discrete Laplace-type equation corresponding to $(\textnormal{Q}1)_{\delta=0}$ is now quite simple. Since $X-Y$ lies on a line, $1/(X-Y)$ describes a circle unless $X-Y$ passes through the origin. But $\varphi$ has to lie on a line, so all reasonable reality conditions are $X,Y\in \mR \omega +c$, $\theta \in \mR \omega'$ for some complex numbers $\omega,\omega'\neq 0$ and $c$. By rotating all three-leg functions appropriately, we may assume $\omega=\omega'=1$. Furthermore, a variable transform $X \mapsto X-c$ for all $X$ does not change $\varphi$. Thus, it suffices to consider $X,Y,\theta \in \mR$ in Section~\ref{sec:Q10}.

For the other quad-equations of the list Q, we have to investigate the multiplicative three-leg functions $\Phi$. Since $\Phi=\exp(i\varphi)$ and the values of $\varphi$ lie on a line, the possible values of $\Phi$ lie either on the real line, a circle centered at the origin, or a spiral.

Let us first discuss the case that all variables $X,Y$ lie on lines parallel to the one $i\theta$ lies. If we are in one of the two cases of $(\textnormal{Q}3)$, the condition that the values of $\varphi$ lie on a line imply that either $\theta$ or $i\theta$ lies on the real line. For both $(\textnormal{Q}1)_{\delta=1}$ and $(\textnormal{Q}3)_{\delta=0}$, the same condition implies that the line of $X-Y$ passes through the origin. For $(\textnormal{Q}2)$ and $(\textnormal{Q}3)_{\delta=1}$, the same is true for the line of $X+Y$. Thus, we can assume that all $X,Y,i\theta$ lie on the same line. 

In light of the fact that Equation~(\ref{eq:Q}) and generic three field variables on the vertices of a quadrilateral of $\cG$ determine the fourth field variable uniquely, it is natural to demand that for a generic integrable labeling of $E(\cG)$, generic values of $X$, and generic values of all but one $Y_k$, $e_k=(x,y_k)$ an edge of $\cG_W$, the discrete Laplace Equation~(\ref{eq:Laplace}) determines the remaining $Y_s$ uniquely. This is not fulfilled in the cases of $(\textnormal{Q}2)$ and $(\textnormal{Q}3)_{\delta=1}$ if $X,Y,i\theta$ lie on the same line, using that in this setting, $\Phi(X,\cdot;\theta)$ generally cannot attain a certain range of positive values. The same is true for $(\textnormal{Q}3)_{\delta=0}$ if $X,Y,i\theta \in \mR$.

For the remaining two cases of $(\textnormal{Q}1)$ and $(\textnormal{Q}3)_{\delta=0}$ if $X,Y,i\theta \in \mR i$, we observe that $\varphi$ has a singularity if $X-Y=i\theta$. But to apply methods of real variational calculus, we need to restrict the variables to a domain where $\varphi$ can be defined smoothly. Depending on the labels $\theta$, $X-Y$ might take values only on a rather small interval, eventually contradicting the above condition that the discrete Laplace Equation~(\ref{eq:Laplace}) determines the remaining value $Y_s$ uniquely in the generic case.

Therefore, $X,Y,i\theta$ do not all lie on parallel lines. Now, if both $X-Y$ and $X+Y$ avoid $\varepsilon$-balls of $i\theta+2k\pi i$, $k\in \mZ$, $\Phi$ is not only smooth for all $X,Y$, but also bounded in the cases of $(\textnormal{Q}1)_{\delta=1}$, $(\textnormal{Q}2)$, and $(\textnormal{Q}3)$. Thus, the values of $\Phi$ neither form a spiral, nor the real line in this case.

Let us fix $X$. If the values of $\Phi$ lie on a spiral or the real line, $X-Y$ or $X+Y$ approaches the set of $i\theta+2k\pi i$, $k \in \mZ$, for any $\theta$. As above, we want to avoid singularities of $\varphi$, i.e., that $X-Y$ or $X+Y$ attains the set of $i\theta+2k\pi i$, $k\in\mZ$, for too many $\theta$s. In particular, there are $\theta$s such that no $i\theta+2k\pi i$, $k\in \mZ$, is attained. Noting that $\varphi$ is locally invertible, $X-Y$ or $X+Y$ actually approaches some fixed $i\theta+2k\pi i$ for $Y \to \infty$ or $Y \to -\infty$. Changing $\theta$ a tiny bit then gives a contradiction. As a consequence, the values of $\Phi$ indeed lie on a circle centered at the origin.

An explicit calculation for $(\textnormal{Q}1)_{\delta=1}$ shows that this happens only if $X-Y$ lies on the same line as $\theta$. Hence, all reality conditions for $(\textnormal{Q}1)_{\delta=1}$ are given by the same as for $(\textnormal{Q}1)_{\delta=0}$. By appropriate scaling and shifting, we can again restrict to $X,Y,\theta \in \mR$.

For $(\textnormal{Q}3)_{\delta=0}$, we get that $\theta$ is always real or purely imaginary, and $X \in \mR + k\pi i$, $Y \in \mR + k'\pi i$ in the first and $X,Y \in \mR$ in the second case, for some integers $k,k'$. Again, there are just three essentially different reality conditions, namely $X,\theta \in \mR$ and either $Y$ or $Y+\pi i$ is real, or $X,Y,\theta \in \mR i$.

Using that the multiplicative three-leg function of $(\textnormal{Q}2)$ or $(\textnormal{Q}3)_{\delta=1}$ is (up to a factor of $\exp(i\theta)$ in the second case) just the one of $(\textnormal{Q}1)_{\delta=1}$ or $(\textnormal{Q}3)_{\delta=0}$, respectively, multiplied with the corresponding function with $X-Y$ replaced by $X+Y$, the same reality conditions that we obtained for $(\textnormal{Q}1)_{\delta=1}$ are applicable to $(\textnormal{Q}2)$; and for $(\textnormal{Q}3)_{\delta=1}$, we can take the same conditions as for $(\textnormal{Q}3)_{\delta=0}$.

Let us now exclude further reality conditions for $(\textnormal{Q}2)$ and $(\textnormal{Q}3)_{\delta=1}$. For this, we fix some $\theta \neq 0$ such that both $X-Y$ and $X+Y$ avoid $i\theta$. The existence of such a $\theta$ follows from our consideration above. Again by an explicit calculation using that the image of $\varphi$ lies on a line parallel to the imaginary axis, $\theta$ is either real or purely imaginary in the case of $(\textnormal{Q}3)_{\delta=1}$. 

Now, it is not hard to see that the radius of the circle on which the values of $\Phi$ lie has to be one. Then, we can invert $\varphi(\cdot,\cdot;\theta)$ locally and get a two-dimensional submanifold of $\mC^2$ on which $(X,Y)$ lies. But we already know that the image of $\varphi$ of the planes $(X,Y)$ coming from the reality conditions of $(\textnormal{Q}1)_{\delta=1}$ and $(\textnormal{Q}3)_{\delta=0}$ lie on the imaginary line. In particular, these planes are the corresponding preimages and there are no further reality conditions.

In the case of $(\textnormal{Q}4)$, the situation is more complicated due to the theta functions appearing in the three-leg function. But we believe that for rectangular lattices with a half-period ratio of $-t/(2\pi i)$, $t\geq 0$, a similar reasoning as above can show that in addition to the reality conditions for $(\textnormal{Q}3)$, the only remaining reality condition is $X,\theta \in \mR i$ and $Y+t$ is purely imaginary.


\section{Generalized discrete action functionals}\label{sec:action}

This section is devoted to the study of the generalized discrete action functionals defined in (\ref{eq:action}) of discrete Laplace-type equations as in (\ref{eq:Laplace}). We stress again that in general, Equation~(\ref{eq:Laplace}) does not have to be induced by Equations~(\ref{eq:Q}) on the faces of $\cG$, although the three-leg functions come from integrable quad-equations. In particular, the labeling of $E(\cG_W)$ does not need to be integrable and the extra terms $\Theta \in \mR$ might be arbitrary. But if we consider a discrete Laplace-type integrable equation, the labeling of $E(\cG_W)$ is integrable and $\Theta_x \in 2\pi\mZ$ for inner white vertices $x$.

Having a quick look at the three-leg functions given in the appendix, we see that $\theta=0$ implies $\varphi \equiv 0$ in the case of $(\textnormal{Q}1)_{\delta=0}$ and $\Phi \equiv 1$ for the other quad-equations; and $\theta\equiv 0 \mod \pi$ implies $\Phi \equiv 1$ for $(\textnormal{Q}3)$ and $(\textnormal{Q}4)$. In regard to the discrete Laplace-type equation, we may just delete edges $e$ with $\theta_e=0$ or $\theta_e\equiv 0 \mod \pi$, respectively. Then, $\cG_W$ might split into several components, and each component can be treated separately. For simplicity, let us assume that already for all edges $e\in E(\cG_W)$, $\theta_e\neq 0$ or $\theta_e\not\equiv 0 \mod \pi$, depending on the discrete integrable quad-equation. By the same reason, we additionally exclude $\theta \in 2\pi\tau\mZ$ in the case of $(\textnormal{Q}4)$, where $\tau$ is the half-period ratio of the Jacobi theta functions.

To obtain a real functional $S$, we restrict the variables and labels to the reality conditions discussed in Section~\ref{sec:reality}. Now, we can ask for convexity or concavity of $S$, factoring out the invariance under the global variable transformation $X\mapsto X+h$ in the cases of $(\textnormal{Q}1)$ and $(\textnormal{Q}3)_{\delta=0}$. In Section~\ref{sec:main}, we summarize the main theorem of our subsequent consideration. We are able to give necessary and sufficient conditions on labels $\theta$ for strict convexity or concavity if the three-leg functions correspond to one of the integrable quad-equations $(\textnormal{Q}1)$, $(\textnormal{Q}2)$, and $(\textnormal{Q}3)$. These conditions will demand that $\theta$ or $i\theta$ is either always positive or always negative or that $\exp(i\theta)$ always lies in one of the semicircles $S^1_\pm :=\left\{z\in \mC | \left|z\right|=1, \pm \im(z) > 0 \right\}$. In the case of a discrete Laplace-type equation corresponding to $(\textnormal{Q}4)$, we find at least sufficient conditions for strict convexity or concavity.

The actual proof of Theorem~\ref{th:main} is given in Sections~\ref{sec:Q10} to~\ref{sec:Q4}. In Sections~\ref{sec:Q30} and~\ref{sec:Q31}, we will see a remarkable analogy between the generalized discrete action functionals of $(\textnormal{Q}3)$ and the Euclidean and hyperbolic circle pattern functionals of the first author and Springborn in \cite{BSp04}. Finally, we show in Section~\ref{sec:limit} how the functionals of $(\textnormal{Q}3)_{\delta=1}$ can be obtained as certain limits of the ones of $(\textnormal{Q}4)$.


\subsection{Main theorem} \label{sec:main}

We summarize the main result that will be proven in the following paragraphs. In the case of discrete Laplace-type equations corresponding to $(\textnormal{Q}4)$, we restrict to rectangular lattices, corresponding to a purely imaginary half-period ratio $\tau=-t/(2\pi i)$; the case of rhombic lattices ($|\tau|=1$) will be discussed separately in Section~\ref{sec:Q4}.

\begin{theorem}\label{th:main}
Let $\cG$ be a finite connected strongly regular bipartite quad-graph embedded in an oriented surface, and let $\cG_W$ denote the white graph corresponding to a $2$-coloring of $V(\cG)$. Moreover, let $\theta_e$ be a labeling of the edges $e \in E(\cG_W)$.

For any discrete Laplace-type equation corresponding to a discrete integrable quad-equation of the list Q, we consider reality conditions under which the corresponding generalized discrete action functional is real. To be more precise, we restrict the variables $X$ and labels $\theta$ to either the real or the imaginary line, assuming that $\theta$ is never zero (and $\theta \not\equiv 0 \mod \pi$ in the cases of $(\textnormal{Q}3)$ and $(\textnormal{Q}4)$). Then, $S$ is strictly concave on $\mR^{\left|V(\cG_W)\right|}$ or $i\mR^{\left|V(\cG_W)\right|}$, or on the subspaces $U$ or $iU$ defined by
\begin{equation}\label{eq:U}
 U=\left\{\left\{X\right\}_{x \in V(\cG_W)} \subset \mR^{\left|V(\cG_W)\right|} | \sum\limits_{x \in V(\cG_W)} X = 0\right\},
\end{equation} 
if the following restrictions of the labels $\theta_e$, $e \in E(\cG_W)$ are satisfied:
\begin{center}
\begin{tabular}{|c|c|c|c|} \hline
\textbf{quad-equation} & \textbf{functional} & \textbf{space} & \textbf{concavity condition}\\
\hline
$(\textnormal{Q}1)_{\delta=0}$ & (\ref{eq:action_Q10}) & $U$ & $\theta_e \in \mR^+$ \\
\hline
$(\textnormal{Q}1)_{\delta=1}$ & (\ref{eq:L_Q11}) & $U$ & $\theta_e \in \mR^+$ \\
\hline
$(\textnormal{Q}2)$ & (\ref{eq:L_Q2}) & $\mR^{\left|V(\cG_W)\right|}$ & $\theta_e \in \mR^+$ \\
\hline
$(\textnormal{Q}3)_{\delta=0}$ & (\ref{eq:action_Q30}) & $U$ & $\exp(i\theta_e) \in S^1_+$ \\
\hline
$(\textnormal{Q}3)_{\delta=0}$ & (\ref{eq:action_Q30E}) & $iU$ & $\theta_e \in i\mR^+$ \\
\hline
$(\textnormal{Q}3)_{\delta=1}$ & (\ref{eq:action_Q31}) & $\mR^{\left|V(\cG_W)\right|}$ & $\exp(i\theta_e) \in S^1_+$ \\
\hline
$(\textnormal{Q}3)_{\delta=1}$ & (\ref{eq:action_Q31E}) & $i\mR^{\left|V(\cG_W)\right|}$ & $\theta_e \in i\mR^+$ \\
\hline
$(\textnormal{Q}4)$ & (\ref{eq:L_Q4}) & $\mR^{\left|V(\cG_W)\right|}$ & $\exp(i\theta_e) \in S^1_+$ \\
\hline
$(\textnormal{Q}4)$ & (\ref{eq:L_Q4E}) & $i\mR^{\left|V(\cG_W)\right|}$ & $\theta_e \in i(0,r(t))$. \\
\hline
\end{tabular}
\end{center}
Here, $r(t)$ is a positive number depending on $t$. In the cases of $(\textnormal{Q}1)$, $(\textnormal{Q}2)$, and $(\textnormal{Q}3)$, the conditions on $\theta$ for concavity are even necessary conditions.

Replacing $\theta_e$ by $-\theta_e$ yields the same result with concavity replaced by convexity.
\end{theorem}

Note that the restriction to the subspace $U$ is possible if and only if \[\sum\limits_{x\in V(\cG_W)} \Theta_x=0\] in the case of $(\textnormal{Q}1)_{\delta=0}$ and in the case of the second reality condition for $(\textnormal{Q}3)_{\delta=0}$; \[\sum\limits_{x\in V(\cG_W)} \Theta_x=2\pi \sum\limits_{e\in E(\cG_W)}  \text{sgn}(\theta_e)\] in the case of $(\textnormal{Q}1)_{\delta=1}$; and \[\sum\limits_{x\in V(\cG_W)} \Theta_x=\sum\limits_{e\in E(\cG_W)} 2 \theta_e^*\] with $\theta_e^*:=\pi-\theta_e$ in the case of the first reality condition for $(\textnormal{Q}3)_{\delta=0}$.

\begin{remark}
For discrete Laplace-type equations corresponding to $(\textnormal{Q}3)$ and $(\textnormal{Q}4)$, there is another a priori different reality condition that turns out to give almost the same generalized discrete action functionals. We partition the vertex set of $V(\cG_W)$ into any two subsets $V_1$ and $V_2$. Then, $X\in \mR$ if $x \in V_1$, $X \in \mR - \pi i$ if $x \in V_2$, and $\theta \in \mR$ is also a reality condition under which the generalized discrete action functional is real. But the functional in this case is essentially the same as the corresponding Functional~(\ref{eq:action_Q30}), (\ref{eq:action_Q31}), or (\ref{eq:L_Q4}) given in Theorem~\ref{th:main}: One just replaces $L(X,Y;\theta)$ by $L(X',Y';\theta)$ if both $x$ and $y$ belong to the same vertex set and $L(X,Y;\theta)$ by $L(X',Y';\theta')$ if $x$ and $y$ belong to different vertex sets. Here, $\theta':=\theta+\pi$, $X':=X$ if $x \in V_1$, and $X':=X+\pi i$ if $x \in V_2$. 

Then, the corresponding concavity conditions are given by $\exp(i\theta_e) \in S^1_+$ if the vertices $x,y$ of $e$ belong to the same vertex set and $\exp(i\theta_e) \in S^1_-$ otherwise.

In the case of $(\textnormal{Q}4)$, a similar reasoning applies for the reality condition $X\in \mR i$ if $x \in V_1$, $X \in \mR i -t$ if $x \in V_2$, and $\theta \in \mR i$. Then, $\theta':=\theta+it$, and the concavity condition transforms to $\theta_e \in (-t,-t+r(t))i$ for those edges $e$ whose vertices lie in different vertex sets.
\end{remark}


\subsection{\texorpdfstring{$(\textnormal{Q}1)_{\delta=0}$}{(Q1) -- delta=0}} \label{sec:Q10}

We have no change of variables, i.e., $X=x$, $Y=y$, etc. We will investigate the reality condition $X,Y,\theta_e \in \mR$ for all edges $e=(x,y)\in E(\cG_W)$, assuming that $\theta_e$ is never zero. The three-leg function $\varphi$ is given by
\begin{equation*}
 \varphi(X,Y;\theta)=\frac{\theta}{X-Y},
 \end{equation*}
which is smooth as long $X \neq Y$. The generalized discrete action functional is given by
\begin{equation}\label{eq:action_Q10}
 S=\sum\limits_{e=(x,y)\in E(\cG_W)} \theta_e \log \left|X-Y\right| - \sum\limits_{x \in V(\cG_W)} \Theta_x X.
 \end{equation}
Here, $\log$ is any fixed branch of the logarithm. $S$ is smooth as long $X \neq Y$ for all edges $e=(x,y)$; otherwise, $S$ is not defined. The Hessian of $S$ can be easily calculated:
\begin{equation*}
 D^2S=\sum\limits_{e=(x,y)\in E(\cG_W)} \frac{-\theta_e}{(X-Y)^2} (dX-dY)^2.
 \end{equation*}

Now, we can expect strict convexity or concavity only on the subspace $U$ as defined in (\ref{eq:U}). But to be able to restrict to $U$, $S$ has to be invariant under the global variable transformation $X \mapsto X+h$, $h\in\mR$. This is the case if and only if
\begin{equation*}
 \sum\limits_{x\in V(\cG_W)} \Theta_x=0
\end{equation*}
is satisfied. Clearly, strict concavity on $U$ is obtained if $\theta_e \in \mR^+$ for all $e\in E(\cG_W)$ and strict convexity if $\theta_e \in \mR^-$ for all $e\in E(\cG_W)$.

Conversely, let us assume that $S$ is concave (or convex). For an edge $e$ with vertices $x$ and $y$, let us consider $S$ as a function of $X$ and $Y$. The other $Z$, $z \in V(\cG_W) \backslash \left\{x,y\right\}$, are just parameters. If $Z \to \pm \infty$ for all other vertices $z$, \begin{equation*} D^2S(X,Y) \to \frac{-\theta_e}{(X-Y)^2} (dX-dY)^2. \end{equation*} In particular, $S$ is concave (or convex) only if $\theta_e > 0$ (or $\theta_e < 0$).


\subsection{\texorpdfstring{$(\textnormal{Q}1)_{\delta=1}$}{(Q1) -- delta=1}} \label{sec:Q11}

As before, we have no change of variables and we investigate the reality condition $X,Y,\theta_e \in \mR$ for all edges $e=(x,y)\in E(\cG_W)$, $\theta_e \neq 0$. The three-leg function $\Phi$ is given by
\begin{equation*}
 \Phi(X,Y;\theta)=\frac{X-Y+i\theta}{X-Y-i\theta}.
 \end{equation*}
Let us choose
\begin{equation*}
 \varphi(X,Y;\theta)=-i\log(X-Y+i\theta)+i\log(X-Y-i\theta),
 \end{equation*}
where we consider $\log$ to be the principle branch of the logarithm. One can easily check that
\begin{align}
L(X,Y;\theta)&=(iX-iY+\theta)\log(X-Y-i\theta) \notag\\
&-(iX-iY-\theta)\log(X-Y+i\theta) + 2\pi Y \text{sgn}(\theta) \label{eq:L_Q11}
\end{align}
satisfies the Conditions~(\ref{eq:symmetry})~and~(\ref{eq:derivative}) and $L(X,Y;\theta)=-L(X,Y;-\theta)$. Then, the generalized discrete action functional $S$ is just given by Formula~(\ref{eq:action}) and $S$ is smooth everywhere. An elementary calculation shows that its Hessian is
\begin{equation*}
 D^2S= \sum\limits_{e=(x,y)\in E(\cG_W)} \frac{-2\theta_e}{(X-Y)^2+\theta_e^2} (dX-dY)^2.
 \end{equation*}
 
Again, we can only expect strict concavity or convexity if we restrict to the subspace $U$ as defined in (\ref{eq:U}). With the same arguments as in Section~\ref{sec:Q10}, $S$ is strictly concave if and only if $\theta_e \in \mR^+$ for all $e\in E(\cG_W)$ and strictly convex if and only if $\theta_e \in \mR^-$ for all $e\in E(\cG_W)$.

To be able to restrict to the subspace $U$, the functional $S$ has to be invariant under the global variable transformation $X \mapsto X+h$, $h\in\mR$, such that
\begin{equation}\label{eq:scaling_Q11}
\sum\limits_{x\in V(\cG_W)} \Theta_x=2\pi \sum\limits_{e\in E(\cG_W)}  \text{sgn}(\theta_e).
\end{equation}
 
Note that
\begin{align*}
0&< \varphi(X,Y;\theta)< \ \ \, 2 \pi \text{ if } \theta>0,\\
0&> \varphi(X,Y;\theta)> -2 \pi \text{ if } \theta<0.
\end{align*}
So according to the discrete Laplace-type Equation~(\ref{eq:Laplace}),
\begin{align*}
\forall x \in V(\cG_W): 0&<\Theta_x< \ \ \, 2 \pi \deg_{\cG_W}(x) \text{ if } \theta_e>0 \; \forall e \in E(\cG_W),\\
\forall x \in V(\cG_W): 0&>\Theta_x> -2 \pi \deg_{\cG_W}(x) \text{ if } \theta_e<0 \; \forall e \in E(\cG_W)
\end{align*}
is necessary to obtain solutions. Here, $\deg_{\cG_W}(x)$ denotes the \textit{degree} of $x$ in $\cG_W$, i.e., the number of adjacent white vertices.


\subsection{\texorpdfstring{$(\textnormal{Q}2)$}{(Q2)}} \label{sec:Q2}

Now, we change the variables by $x=X^2$, $y=Y^2$, etc. The reality condition is given by $X,Y, \theta_e \in \mR$ for all edges $e=(x,y)\in E(\cG_W)$, $\theta_e \neq 0$. In the case of $(\textnormal{Q}2)$, the three-leg function $\Phi$ is given by
\begin{equation*}
 \Phi(X,Y;\theta)=\frac{(X-Y+i\theta) (X+Y+i\theta)}{(X-Y-i\theta) (X+Y-i\theta)}.
 \end{equation*}
We choose
\begin{equation*}
 i\varphi(X,Y;\theta)=\log(X-Y+i\theta)-\log(X-Y-i\theta)+\log(X+Y+i\theta)-\log(X+Y-i\theta),
 \end{equation*}
where $\log$ is the principle branch of the logarithm. Then, a function $L$ satisfying Conditions~(\ref{eq:symmetry})~and~(\ref{eq:derivative}) and $L(X,Y;\theta)=-L(X,Y;-\theta)$ is given by
\begin{align}
L(X,Y;\theta)&=(iX-iY+\theta)\log(X-Y-i\theta)-(iX-iY-\theta)\log(X-Y+i\theta) \notag\\
&+(iX+iY+\theta)\log(X+Y-i\theta)-(iX+iY-\theta)\log(X+Y+i\theta) \notag\\
&+2\pi Y \text{sgn}(\theta). \label{eq:L_Q2}
\end{align}
Now, the generalized discrete action functional is given by Formula~(\ref{eq:action}). Using the result of Section~\ref{sec:Q11},
\begin{align*}
 D^2S&=\sum\limits_{e=(x,y)\in E(\cG_W)}\left( \frac{-2\theta_e}{(X-Y)^2+\theta_e^2} (dX-dY)^2 +\frac{-2\theta_e}{(X+Y)^2+\theta_e^2} (dX+dY)^2\right).
\end{align*}

Clearly, $S$ is strictly concave (or strictly convex) on $\mR^{\left| V(\cG_W)\right|}$ if $\theta_e \in \mR^+$ (or $\theta_e \in \mR^- $) for all $e\in E(\cG_W)$. With a similar reasoning as in Section~\ref{sec:Q11}, these conditions are even necessary for concavity (or convexity) and
\begin{align*}
\forall x \in V(\cG_W): 0&<\Theta_x< \ \ \, 4 \pi \deg_{\cG_W}(x) \text{ if } \theta_e>0 \; \forall e \in E(\cG_W),\\
\forall x \in V(\cG_W): 0&>\Theta_x> -4 \pi \deg_{\cG_W}(x) \text{ if } \theta_e<0 \; \forall e \in E(\cG_W)
\end{align*}
is necessary to obtain solutions of the discrete Laplace-type equation.


\subsection{\texorpdfstring{$(\textnormal{Q}3)_{\delta=0}$}{(Q3) -- delta=0}} \label{sec:Q30}

We consider the following change of variables: $x=\exp(X)$, $y=\exp(Y)$, etc. Let us start with the reality condition $X,Y,\theta_e \in \mR$ for all $e=(x,y)\in E(\cG_W)$. In addition, we assume that $\theta_e \not\equiv 0 \mod \pi$. Since the quad-equation does not change if we add multiples of $2\pi$ to $\alpha$ or $\beta$, $\theta$ is defined only up to a multiple of $2 \pi$. So let us assume that all labels $\theta \in (0,2 \pi)$. A change of the interval might only influence the terms $\Theta$ in the discrete Laplace-type Equation~\ref{eq:Laplace}, but not the following arguments.

The three-leg function $\Phi$ is given by
\begin{equation}\label{eq:Phi_Q30}
 \Phi(X,Y;\theta)=\exp(-i \theta)\frac{\sinh(\frac{X-Y+i\theta}{2})}{\sinh(\frac{X-Y-i\theta}{2})},
 \end{equation}
so let us choose
\begin{equation}\label{eq:phi_Q30} \varphi(X,Y;\theta)=-i\log\left(-\frac{\sinh(\frac{X-Y+i\theta}{2})}{\sinh(\frac{X-Y-i\theta}{2})}\right)+(\pi-\theta),
 \end{equation}
where $\log$ is the principle branch of the logarithm. $\varphi$ is smooth everywhere.

To find a symmetric primitive of $\varphi$, let us introduce the dilogarithm \[\Li(z)= -\int\limits_0^z \frac{\log(1-u)}{u}du.\] The dilogarithm is an analytic function on the complex plane cut along $[1,\infty)$. Following the presentation of the first author and Springborn in \cite{BSp04}, we have \begin{align*}\im \Li(\exp(X-Y+i\theta))&=\frac{1}{2i}\Li(\exp(X-Y+i\theta))-\frac{1}{2i}\Li(\exp(X-Y-i\theta))\\&=\frac{1}{2i}\int\limits_{-\infty}^{X-Y}\log\left(\frac{1-\exp(v-i\theta)}{1-\exp(v+i\theta)}\right) dv,\end{align*} substituting $u=\exp(v+i\theta)$ into the integral representation of the dilogarithm.

Observing that \[\frac{\sinh(\frac{X-Y+i\theta}{2})}{\sinh(\frac{X-Y-i\theta}{2})}=\exp(-i\theta)\frac{1-\exp(X-Y+i\theta)}{1-\exp(X-Y-i\theta)}=\exp(i\theta)\frac{1-\exp(Y-X-i\theta)}{1-\exp(Y-X+i\theta)},\] we derive the generalized discrete action functional
\begin{align}
 S&=\sum\limits_{e=(x,y)\in E(\cG_W)} -\left(\im \Li(\exp(X-Y+i \theta_e))+\im \Li(\exp(Y-X+i \theta_e))\right)\notag\\
&+\sum\limits_{e=(x,y)\in E(\cG_W)}\theta_e^* (X+Y)-\sum\limits_{x\in V(\cG_W)}\Theta_x X, \label{eq:action_Q30}
\end{align}
where $\theta^*:= \pi - \theta$. Note that $\theta_e^*$ can be replaced by $-\theta_e$ if $\Theta$s are changed appropriately, leading to a primitive $L$ that satisfies $L(X,Y;\theta)=-L(X,Y;-\theta)$.

$-S$ coincides with the Euclidean circle pattern functional of the first author and Springborn in \cite{BSp04}, describing a Euclidean circle pattern combinatorially equivalent to $\cG_W$ with logarithmic radii $X$, interior intersection angles $\theta_e^* \in (0,\pi)$, and cone (or boundary) angles $\Theta_x>0$. We will investigate this interpretation again in Section~\ref{sec:circle_patterns}.

An elementary calculation shows the same result for the Hessian of $S$ as in \cite{BSp04},
\begin{equation}
 D^2S=\sum\limits_{e=(x,y)\in E(\cG_W)} \frac{-\sin(\theta_e)}{\cosh(X-Y)-\cos(\theta_e)} (dX-dY)^2.\label{eq:D2_Q30}
\end{equation}

Clearly, we can only expect strict concavity or convexity if we restrict the variables $X$ to the subspace $U$ as defined in (\ref{eq:U}). Then, $S$ is strictly concave if and only if $\theta_e \in (0,\pi)$ for all $e\in E(\cG_W)$ and strictly convex if and only if $\theta_e \in (\pi,2\pi)$ for all $e\in E(\cG_W)$. Since $\theta$ was a priori only defined up to a multiple of $2\pi$, we can state more generally that $S$ is strictly concave if and only if $\exp(i\theta_e) \in S^1_+$ for all $e\in E(\cG_W)$ and strictly convex if and only if $\exp(i\theta_e) \in S^1_-$ for all $e\in E(\cG_W)$.

To be able to restrict to the subspace $U$, the functional $S$ has to be invariant under the global variable transformation $X \mapsto X+h$, requiring
\begin{equation}\label{eq:scaling_Q30}
 \sum\limits_{x\in V(\cG_W)} \Theta_x=\sum\limits_{e\in E(\cG_W)} 2 \theta_e^*.
\end{equation}

Let $s\in (0,\pi)$. Then,
\begin{equation}\label{eq:remark_Q30}
   \varphi(X,Y; \pi + s)=-\varphi(X,Y; \pi - s).
\end{equation}
In particular, the discrete Laplace-type Equations~(\ref{eq:Laplace}) with $\theta_e=\pi -s_e$ and $\tilde{\theta}_e=\pi +s_e$ are equivalent if one chooses $\tilde{\Theta}_x=-\Theta_x$. Also, $L(X,Y;\pi -s)=-L(X,Y;\pi+s)$ as it appears in Functional~(\ref{eq:action_Q30}). Thus, we may restrict our attention to the concavity condition $\exp(i\theta_e) \in S^1_+$ for all $e\in E(\cG_W)$. 

In this case, $0<\varphi(X,Y;\theta)<2\pi$, so $0<\Theta_x<2 \pi \deg_{\cG_W} (x)$ for all vertices $x$ is necessary to obtain solutions. Therefore, the terms $\Theta_x$ can be indeed seen as cone (or boundary) angles of a Euclidean circle pattern and Functional~(\ref{eq:action_Q30}) can be identified with the Euclidean circle pattern functional. Also, Condition~\ref{eq:scaling_Q30} appeared in the first author's and Springborn's paper \cite{BSp04} and it will come again in Section~\ref{sec:Q_Dirichlet}. There, we will discuss when maxima of $S$ exist and adapt the results obtained by the first author and Springborn to our setting.

\begin{remark}
Let us shortly discuss the reality condition described by $X\in \mR$ if $x \in V_1$, $X \in \mR - \pi i$ if $x \in V_2$, and $\theta \in \mR \backslash \mZ\pi$, where $V_1 \dot{\cup} V_2$ is some partition of $V(\cG_W)$ into two vertex sets. Now, define $X':=X$ if $x \in V_1$, $X':=X+\pi i$ if $x \in V_2$, $\theta'_e:=\theta_e$ if the vertices of the edge $e$ lie in the same vertex set, and $\theta'_e:=\theta_e +\pi$ if the vertices of $e$ are in different vertex sets. Then,
\begin{equation*}
 \Phi(X,Y;\theta)=\exp(-i \theta^\prime)\frac{\sinh(\frac{X^\prime-Y^\prime+i\theta^\prime}{2})}{\sinh(\frac{X^\prime-Y^\prime-i\theta^\prime}{2})}.
\end{equation*}
Therefore, we can argue exactly as above with variables $X'$ instead of $X$ and parameters $\theta^\prime$ instead of $\theta$. In particular, the generalized discrete action functional is strictly concave if and only if $\exp(i\theta_e) \in S^1_+$ if the vertices $x,y$ of $e$ belong to the same vertex set and $\exp(i\theta_e) \in S^1_-$ otherwise.

Note that due to Equation~(\ref{eq:remark_Q30}), another way to represent the generalized discrete action functional under this modified reality condition is to replace $L(X,Y;\theta_e)$ by $L(X',Y';\theta_e)$ if the vertices of the edge $e$ belong to the same vertex set and $L(X,Y;\theta_e)$ by $-L(X',Y';\theta_e^*)$ otherwise.
\end{remark}

Let us now come to the reality condition $X,Y,\theta \in \mR i$, $\theta \neq 0$. We introduce new variables $X'$ and labels $\theta'$ by $X=iX'$, $\theta=i\theta'$ and consider the variational formulation with respect to these new variables. Then, we get for the three-leg function 
\begin{equation*}
 \Phi'(X,Y;\theta)=\frac{\sin(\frac{X'-Y'+i\theta'}{2})}{\sin(\frac{X'-Y'-i\theta'}{2})}
 \end{equation*}
and we can choose
\begin{equation*} \varphi(X,Y;\theta)=-i\log\left(\frac{\sin(\frac{X'-Y'+i\theta'}{2})}{\sin(\frac{X'-Y'-i\theta'}{2})}\right).
 \end{equation*}
Note that we do not take the three-leg function $\Phi$, but $\Phi'$ instead. The reason for taking $\Phi$ rather than $\Phi'$ before was to identify the generalized discrete action functional with the Euclidean circle pattern functional. Taking $\Phi$ instead of $\Phi'$ under the actual reality condition would give an additional imaginary summand of $-i\theta'$ to $\varphi$ that we want to avoid. Furthermore, we cannot take one branch of the logarithm to define $\varphi$ globally. But it is easy to see that there is a smooth extension of $\varphi$ to all $X',Y' \in \mR$. We just have to fix one value, and we take $\varphi(X',X';\theta)=\pi$.

In a similar way as before, we construct a symmetric primitive of $\varphi$, but we have to take into account that there is not just one branch of the logarithm in the definition of $\varphi$. Substituting $u=\exp(w+\theta')$ into the integral representation of the dilogarithm yields
\begin{align*}
-\Li(\exp(iX'-iY'+\theta'))&=\int\limits_{-\infty}^{i(X'-Y')}\log(1-\exp(w+\theta'))dw\\&=\int\limits_{-\infty}^{0}\log(1-\exp(w+\theta'))dw\\&+i\int\limits_{0}^{X'-Y'}\log(1-\exp(iw+\theta'))dw.
\end{align*}
For $\theta'<0$, $\Li(\exp(iX'-iY'+\theta'))$ is just the ordinary dilogarithm, but for $\theta'>0$, we extend $\Li(\exp(iX'-iY'+\theta'))$ to a smooth function of $X',Y'\in \mR$ by the integral representation above, choosing appropriate branches of the logarithm during integration. We demand that for $X'=Y'$, the extended dilogarithm coincides with the ordinary. Again, \begin{align*}\frac{\sin(\frac{X'-Y'+i\theta'}{2})}{\sin(\frac{X'-Y'-i\theta'}{2})}&=\exp(\theta')\frac{1-\exp(i(X'-Y')-\theta')}{1-\exp(i(X'-Y')+\theta')}\\&=\exp(-\theta')\frac{1-\exp(i(Y'-X')+\theta')}{1-\exp(i(Y'-X')-\theta')},\end{align*} and we derive the generalized discrete action functional
\begin{align}
 S&=\frac{1}{2}\sum\limits_{e=(x,y)\in E(\cG_W)} \left( \Li(\exp(iX'-iY'- \theta'_e))-\Li(\exp(iX'-iY'+ \theta'_e))\right)\notag\\
 &+\frac{1}{2}\sum\limits_{e=(x,y)\in E(\cG_W)} \left( \Li(\exp(iY'-iX'- \theta'_e))-\Li(\exp(iY'-iX'+ \theta'_e))\right)\notag\\
 &-\sum\limits_{x\in V(\cG_W)}\Theta_x X'\notag\\
&=\frac{1}{2}\sum\limits_{e=(x,y)\in E(\cG_W)} \left( \Li(\exp(X-Y+i\theta_e))-\Li(\exp(X-Y-i\theta_e))\right)\notag\\
 &+\frac{1}{2}\sum\limits_{e=(x,y)\in E(\cG_W)} \left( \Li(\exp(Y-X+i\theta_e))-\Li(\exp(Y-X-i\theta_e))\right)\notag\\
&+\sum\limits_{x\in V(\cG_W)}\Theta_x iX. \label{eq:action_Q30E}
\end{align}

Using the analogy of this functional with (\ref{eq:action_Q30E}) before, we obtain
\begin{equation*}
 D^2S=\sum\limits_{e=(x,y)\in E(\cG_W)} \frac{\sinh(\theta'_e)}{\cos(X'-Y')-\cosh(\theta'_e)} (dX'-dY')^2.
\end{equation*}

Again, we can only expect strict concavity or convexity if we restrict the variables $X'$ to the subspace $U$ as defined in (\ref{eq:U}). But now, $S$ is strictly concave if and only if $\theta'_e>0$ for all $e\in E(\cG_W)$ and $S$ is strictly convex if and only if $\theta'_e<0$ for all $e\in E(\cG_W)$. Thus, $X\in iU$ and the conditions for concavity and convexity correspond to $\theta_e \in \mR^+i$ and $\theta_e\in \mR^-i$, respectively. In contrast to the reality conditions considered above, the restriction to the subspace $iU$ is now possible if and only if
\begin{equation*}
 \sum\limits_{x\in V(\cG_W)} \Theta_x=0,
\end{equation*}
which corresponds to the same scaling condition as for $(\textnormal{Q}1)_{\delta=0}$.


\subsection{\texorpdfstring{$(\textnormal{Q}3)_{\delta=1}$}{(Q3) -- delta=1}} \label{sec:Q31}

This time, we choose $x=\cosh(X)$, $y=\cosh(Y)$, etc. as change of variables. Again, we start with the reality condition $X,Y,\theta_e \in \mR$ and assume that $\theta_e \notin \pi\mZ$ for all edges $e=(x,y)\in E(\cG_W)$. Since the formulation of the quad-equation $(\textnormal{Q}3)_{\delta=1}$ does not change if a multiple of $2\pi$ is added to a label, we may choose $\theta_e \in (0,2\pi)$ for all edges $e$.

The three-leg function is given by
\begin{equation}\label{eq:Phi_Q31}
 \Phi(X,Y;\theta)=\frac{\sinh(\frac{X-Y+i\theta}{2}) \sinh(\frac{X+Y+i\theta}{2})}{\sinh(\frac{X-Y-i\theta}{2})\sinh(\frac{X+Y-i\theta}{2})}
 \end{equation}
and we choose
\begin{equation}\label{eq:phi_Q31} i\varphi(X,Y;\theta)=\log\left(-\frac{\sinh(\frac{X-Y+i\theta}{2})}{\sinh(\frac{X-Y-i\theta}{2})}\right)+\log\left(-\frac{\sinh(\frac{X+Y+i\theta}{2})}{\sinh(\frac{X+Y-i\theta}{2})}\right),
 \end{equation}
where $\log$ is the principle branch of the logarithm. Essentially the same calculations as in Section~\ref{sec:Q30} yield
\begin{align}
S=&-\sum\limits_{e=(x,y)\in E(\cG_W)}\left(\im \Li(\exp(X-Y+i \theta_e))+\im \Li(\exp(Y-X+i \theta_e))\right)\notag\\
&-\sum\limits_{e=(x,y)\in E(\cG_W)} \left(\im \Li(\exp(X+Y+i \theta_e))+\im \Li(\exp(-X-Y+i \theta_e))\right)\notag\\
&-\sum\limits_{x\in V(\cG_W)}\Theta_x X \label{eq:action_Q31}
\end{align}
as the generalized discrete action functional with Hessian
\begin{align}
 D^2S&=\sum\limits_{e=(x,y)\in E(\cG_W)}\frac{-\sin(\theta_e)}{\cosh(X-Y)-\cos(\theta_e)}(dX-dY)^2\notag\\
&+\sum\limits_{e=(x,y)\in E(\cG_W)}\frac{-\sin(\theta_e)}{\cosh(X+Y)-\cos(\theta_e)} (dX+dY)^2.\label{eq:D2_Q31}
\end{align}
It follows that $S$ is strictly concave if and only if $\exp(i\theta_e) \in S^1_+$ for all edges $e$ and strictly convex if and only if $\exp(i\theta_e) \in S^1_-$ for all $e\in E(\cG_W)$. Since $\left|\varphi(X,Y;\theta)\right|<2\pi$ if $\theta \not\equiv 0 \mod 2\pi$, $\left| \Theta_x\right|<2\pi \deg_{\cG_W}(x)$ is necessary to obtain solutions of the discrete Laplace-type Equation~(\ref{eq:Laplace}).

Now, $-S$ coincides with the hyperbolic circle pattern functional of the first author and Springborn in \cite{BSp04}, describing a hyperbolic circle pattern combinatorially equivalent to the white graph $\cG_W$ with radii $2\artanh(\exp(X))$, interior intersection angles $\theta_e^* \in (0,\pi)$, and cone (or boundary) angles $\Theta_x>0$. We refer to Section~\ref{sec:circle_patterns} below for some more details.

If $s\in (0,\pi)$,
\begin{equation}\label{eq:remark_Q31}
   \varphi(X,Y; \pi + s)=-\varphi(X,Y; \pi - s),
\end{equation}
and the discrete Laplace-type Equations~(\ref{eq:Laplace}) with $\theta_e=\pi -s_e$ and $\tilde{\theta}_e=\pi +s_e$ can be identified, changing $\tilde{\Theta}_x=-\Theta_x$. So besides $L(X,Y;\theta)=-L(X,Y;-\theta)$, we also have $L(X,Y;\theta)=-L(X,Y;\tilde{\theta})$. Hence, we can restrict to the concavity condition $\exp(i\theta_e) \in S^1_+$ for all $e\in E(\cG_W)$. Then, $\varphi(X,Y;\theta)>0$ if $X,Y<0$, so  $\Theta_x>0$ for all vertices $x$ is necessary to obtain solutions with $X<0$ for all $x\in V(\cG_W)$. As a consequence, the terms $\Theta_x$ can be again seen as cone (or boundary) angles of a circle pattern and Functional~(\ref{eq:action_Q30}) can be identified with the hyperbolic circle pattern functional.

\begin{remark}
For the reality condition $X\in \mR$ if $x \in V_1$, $X \in \mR - \pi i$ if $x \in V_2$, and $\theta \in \mR \backslash \mZ\pi$, where $V_1 \dot{\cup} V_2$ is some partition of $V(\cG_W)$ into two vertex sets, we can argue exactly as in Section~\ref{sec:Q30}. For this, observe that Equation~(\ref{eq:remark_Q31}) is the analog to Equation~(\ref{eq:remark_Q30}).
\end{remark}

Considering the reality condition $X,Y,\theta \in \mR i$, $\theta \neq 0$, we introduce new variables $X'$ and labels $\theta'$ by $X=iX'$, $\theta=i\theta'$ and consider the variational formulation with respect to these new variables. With the same reasoning as in the end of Section~\ref{sec:Q30}, we get the generalized discrete action functional
\begin{align}
S&=\frac{1}{2}\sum\limits_{e=(x,y)\in E(\cG_W)} \left( \Li(\exp(X-Y+i\theta_e))-\Li(\exp(X-Y-i\theta_e))\right)\notag\\
 &+\frac{1}{2}\sum\limits_{e=(x,y)\in E(\cG_W)} \left( \Li(\exp(Y-X+i\theta_e))-\Li(\exp(Y-X-i\theta_e))\right)\notag\\
 &+\frac{1}{2}\sum\limits_{e=(x,y)\in E(\cG_W)} \left( \Li(\exp(X+Y+i\theta_e))-\Li(\exp(X+Y-i\theta_e))\right)\notag\\
 &+\frac{1}{2}\sum\limits_{e=(x,y)\in E(\cG_W)} \left( \Li(\exp(-X-Y+i\theta_e))-\Li(\exp(-X-Y-i\theta_e))\right)\notag\\
&+\sum\limits_{x\in V(\cG_W)}\Theta_x iX \label{eq:action_Q31E}
\end{align}
with Hessian
\begin{align*}
 D^2S&=\sum\limits_{e=(x,y)\in E(\cG_W)}\frac{\sinh(\theta'_e)}{\cos(X'-Y')-\cosh(\theta'_e)} (dX'-dY')^2\\
&+\sum\limits_{e=(x,y)\in E(\cG_W)}\frac{\sinh(\theta'_e)}{\cos(X'+Y')-\cosh(\theta'_e)} (dX'+dY')^2.
\end{align*}

Again, $S$ is strictly concave if and only if $\theta_e\in \mR^+i$ for all edges $e$ of $\cG_W$ and strictly convex if and only if $\theta_e\in \mR^-i$ for all edges $e$.


\subsection{\texorpdfstring{$(\textnormal{Q}4)$}{(Q4)}} \label{sec:Q4}

In this section, we will use several facts of complex analysis and the theory of elliptic functions without going into much detail; most of them can be found in the book of Hurwitz \cite{H00}. But in contrast to the notation in his book, we define the Jacobi theta functions $\vartheta_1,\vartheta_2,\vartheta_3,\vartheta_4=\vartheta_0$ with half-period ratio $\tau$ by the following, where $h^k:=\exp(i\pi \tau k)$ for $k \in \mR$:
\begin{align*}
\vartheta_1(v)&=2\left(h^{\frac{1}{4}}\sin(v)-h^{\frac{9}{4}}\sin(3v)+h^{\frac{25}{4}}\sin(5v)\mp \ldots\right),\\
\vartheta_2(v)&=2\left(h^{\frac{1}{4}}\cos(v)+h^{\frac{9}{4}}\cos(3v)+h^{\frac{25}{4}}\cos(5v)+ \ldots\right),\\
\vartheta_3(v)&=1+2\left(h\cos(2v)+h^4\cos(4v)+h^9\cos(6v)+ \ldots\right),\\
\vartheta_4(v)&=1-2\left(h\cos(2v)-h^4\cos(4v)+h^9\cos(6v)\mp \ldots\right).
\end{align*}
A scaling of the argument by the factor $\pi$ relates these two standard formulations of Jacobi theta functions.

Coming back to the discrete integrable quad-equation $(\textnormal{Q}4)$, we consider the following change of variables: $x=\sn(\pi/2-iX)$, $y=\sn(\pi/2-iY)$, etc., where $\sn$ is the Jacobi elliptic function $\sn$ with modulus $\kappa=\vartheta_2^2(0)\vartheta_3^{-2}(0)$.

In the following, we mainly consider purely imaginary $\tau$, i.e., rectangular period lattices. In the end, we will comment on the situation of rhombic lattices.

First, we investigate the reality condition $X,Y,\theta_e \in \mR$, assuming that for all edges $e=(x,y)\in E(\cG_W)$, $\theta_e \not\equiv 0\mod  \pi$. The three-leg function $\Phi$ is given by
\begin{equation}
 \Phi(X,Y;\theta)
=\frac
{\vartheta_1\left(\frac{X-Y+i\theta}{2i}\right)
\vartheta_1\left(\frac{X+Y+i\theta}{2i}\right)}
{\vartheta_1\left(\frac{X-Y-i\theta}{2i}\right)
\vartheta_1\left(\frac{X+Y-i\theta}{2i}\right)}. \label{eq:Phi_Q4}
\end{equation}
Since $\Phi$ is $2\pi$-periodic in the $\theta$-variable, we may choose the labels $\theta \in (0,2\pi)$.

Now, fix for a moment some $\theta$. Let $U$ be an open strip along the real line such that for all $u \in U$, $\im (u \pm i\theta) \not\equiv 0 \mod 2 \pi$. Then, the function \[u \mapsto -\frac{\vartheta_1(\frac{u+i\theta}{2i})}{\vartheta_1(\frac{u-i\theta}{2i})}\] is holomorphic and has no zeroes in $U$. Moreover, it takes the value 1 if $u=0$. Thus, there is a holomorphic $s_1(u,\theta):U \to \mC$ such that $s_1(0;\theta)=0$ and \[\exp(s_1(u,\theta))=-\frac{\vartheta_1(\frac{u+i\theta}{2i})}{\vartheta_1(\frac{u-i\theta}{2i})}.\] We now define \[\log\left(-\frac{\vartheta_1(\frac{u+i\theta}{2i})}{\vartheta_1(\frac{u-i\theta}{2i})}\right):=s_1(u;\theta),\] which is a smooth and purely imaginary function for real $u$. Indeed, that $s_1(u;\theta)$ is purely imaginary follows from $\vartheta_1(\bar{v})=\overline{\vartheta_1(v)}$ due to the conjugate symmetry of the trigonometric functions. This allows us to choose
\begin{equation}
i\varphi(X,Y;\theta)=\log\left(-\frac{\vartheta_1\left(\frac{X-Y+i\theta}{2i}\right)}{\vartheta_1\left(\frac{X-Y-i\theta}{2i}\right)}\right)+\log\left(-\frac{\vartheta_1\left(\frac{X+Y+i\theta}{2i}\right)}{\vartheta_1\left(\frac{X+Y-i\theta}{2i}\right)}\right).\label{eq:phi_Q4}
\end{equation}

Now, we define the smooth function $I_1(\cdot;\theta):\mR \rightarrow \mR$ by
\begin{equation*}
I_1(u;\theta)=-i\int\limits_0^u \log\left(-\frac{\vartheta_1(\frac{u+i\theta}{2i})}{\vartheta_1(\frac{u-i\theta}{2i})}\right) du,
\end{equation*}
where we integrate along the real line. Clearly, $I_1(0;\theta)=0$, and $I_1$ is even since $\vartheta_1$ is an odd function. Therefore, 
\begin{equation}\label{eq:L_Q4}
L(X,Y;\theta)=I_1(X-Y;\theta)+I_1(X+Y;\theta)
\end{equation}
satisfies Conditions~(\ref{eq:symmetry})~and~(\ref{eq:derivative}) and $L(X,Y;\theta)=-L(X,Y;-\theta)$. Hence, the generalized discrete action functional is given by Formula~(\ref{eq:action}).

Let $\omega:=\pi i$ and $\omega^\prime:=\tau \omega =-t/2<0$. We introduce the meromorphic function
\begin{equation*}
\tilde{\zeta}(u):=\frac{\pi}{2\omega} \frac{\vartheta_1^\prime(v)}{\vartheta_1(v)}=\frac{1}{2i}\frac{d\log(\vartheta_1(v))}{dv}, \quad v=\frac{\pi u}{2 \omega}=\frac{u}{2i}.
\end{equation*}
As $\vartheta_1$, $\tilde{\zeta}$ is conjugate symmetric. Note that
\begin{equation*}
\tilde{\zeta}(u)=\zeta(u)-\frac{\eta u}{\omega},
\end{equation*}
where $\eta:=\zeta(\omega)$ and $\zeta$ is the Weierstra{\ss} zeta-function corresponding to the Weierstra{\ss} $\wp$-function with half-periods $\omega,\omega^\prime$. We eventually obtain
\begin{align}
D^2S=&-i\sum\limits_{e=(x,y)\in E(\cG_W)} \left(\tilde{\zeta}(X-Y+i\theta_e)-\tilde{\zeta}(X-Y-i\theta_e)\right) (dX-dY)^2 \notag\\
&-i\sum\limits_{e=(x,y)\in E(\cG_W)}\left(\tilde{\zeta}(X+Y+i\theta_e)-\tilde{\zeta}(X+Y-i\theta_e)\right) (dX+dY)^2. \label{eq:D2_Q4}
\end{align}

As above, let $h^k:=\exp(i \pi \tau k)=\exp(-tk/2)$ for any $k \in \mR$. Moreover, define $z^{\pm2}:=\exp(\pm u)$. Then,
\begin{align}
\tilde{\zeta}(u)&=\frac{\pi}{2\omega} \cot\left(\frac{\pi u}{2\omega}\right)+\frac{i\pi}{\omega} \sum\limits_{k=1}^\infty \left(\frac{h^{2k}z^{-2}}{1-h^{2k}z^{-2}}-\frac{h^{2k}z^2}{1-h^{2k}z^2}\right)\notag\\
&=-\frac{i}{2} \cot\left(-\frac{i u}{2}\right)+\sum\limits_{k=1}^\infty \frac{\exp(-tk)\exp(-u)-\exp(-tk)\exp(u)}{\exp(-2tk)-\exp(-tk)(\exp(u)+\exp(-u))+1}\notag\\
&=\frac{1}{2} \coth\left(\frac{u}{2}\right)+\sum\limits_{k=1}^\infty \frac{\sinh(u)}{\cosh(u)-\cosh(tk)}. \label{eq:zeta_sum}
\end{align}
Series~(\ref{eq:zeta_sum}) converges absolutely if the denominator never vanishes. In particular, the series converges for $u=Z\pm i\theta$ if $Z\in\mR$ and $\theta \not\equiv 0 \mod \pi$. As a consequence, we obtain after a straightforward calculation:
\begin{align}
-&\frac{\tilde{\zeta}(Z +i\theta)-\tilde{\zeta}(Z -i\theta)}{i \sin \theta}=\frac{1}{\cosh(Z)-\cos(\theta)}+ \notag \\
&2\sum\limits_{k=1}^\infty \frac{\cosh(Z) \cosh(kt)-\cos(\theta)}{\sinh^2(Z)+\cosh^2(kt)+\cos^2(\theta)-2\cosh(Z)\cosh(kt)\cos(\theta)}. \label{eq:second_derivative}
\end{align}
The right hand side is positive if $\theta \not\equiv 0 \mod \pi$. For this, we observe that \[\cos(\theta)<1\leq\cosh(Z)\leq\cosh(Z)\cosh(kt)\] and we use the estimate $\left(\cosh(Z)-\cosh(kt)\right)^2\geq 0$ to obtain
\begin{eqnarray*}
&\sinh^2(Z)+\cosh^2(kt)+\cos^2(\theta)-2\cosh(Z)\cosh(kt)\cos(\theta)\\
\geq& 2\cosh(Z)\cosh(kt)(1-\cos(\theta))+\cos^2(\theta)-1\\
=&(1-\cos(\theta))(2\cosh(Z)\cosh(kt)-\cos(\theta)-1)\\
>&0.
\end{eqnarray*}
It follows that $S$ is strictly concave if $\exp(i\theta_e) \in S^1_+$ for all edges $e\in E(\cG_W)$ and strictly convex if $\exp(i\theta_e) \in S^1_-$ for all edges $e$.

In contrast to our consideration in the previous paragraphs, these conditions are not necessary for (strict) concavity or convexity, i.e., $S$ might be strictly concave (or convex) also if there are labels $\theta_e$ on edges $e$ satisfying $\exp(i\theta_e) \in S^1_+$ as well as labels $\theta_{e^\prime}$ with $\exp(i\theta_{e^\prime}) \in S^1_-$, even in the integrable case.

For example, consider one white edge $e=(x,y)$ and assume that both $x$ and $y$ have exactly $4k+1$ other neighbors in $\cG_W$. In addition, let us suppose that $x$ and $y$ have no common neighbors. For simplicity, we now restrict to just this white graph. A corresponding quad-graph $\cG$ can be easily constructed.

We label the edge $e$ with $\theta_e=-\pi/2$, all other edges in the stars of the vertices $x$ and $y$ receive the label $\pi/2$. Since the labels around a vertex add up to $2k\pi$, the labeling is integrable. The actual reason why the conditions above are not necessary conditions for concavity or convexity is that the function $f:\mR \to \mR$,
\begin{equation*}
f(u):=i\left(\tilde{\zeta}(u+\frac{\pi}{2}i)-\tilde{\zeta}(u-\frac{\pi}{2}i)\right),
\end{equation*}
is periodic and positive. Therefore, there exist constants $K_+, K_->0$ such that $K_+\geq f(u)\geq K_-$ for all real $u$. Choosing the integer $k$ such that $(4k+1)K_->K_+$ yields a strictly concave action functional $S$.

\begin{remark}
Let us investigate the reality condition described by $X\in \mR$ if $x \in V_1$, $X \in \mR - \pi i$ if $x \in V_2$, and $\theta \in \mR \backslash \mZ\pi$, where $V_1 \dot{\cup} V_2$ is some partition of $V(\cG_W)$ into two vertex sets. Again, we define $X':=X$ if $x \in V_1$, $X':=X+\pi i$ if $x \in V_2$, $\theta'_e:=\theta_e$ if the vertices of the edge $e$ lie in the same vertex set, and $\theta'_e:=\theta_e +\pi$ if the vertices of $e$ are in different vertex sets. Then,
\begin{equation*}
\Phi(X,Y;\theta)=\frac
{\vartheta_1\left(\frac{X^\prime-Y^\prime+i\theta^\prime}{2i}\right)
\vartheta_1\left(\frac{X^\prime+Y^\prime+i\theta^\prime}{2i}\right)}
{\vartheta_1\left(\frac{X^\prime-Y^\prime-i\theta^\prime}{2i}\right)
\vartheta_1\left(\frac{X^\prime+Y^\prime-i\theta^\prime}{2i}\right)}.
\end{equation*}
Therefore, we obtain the same generalized discrete action functional as above if we replace $X,Y,\theta$ by $X^\prime,Y^\prime,\theta^\prime$.
\end{remark}

We come now to the reality condition $X,Y,\theta \in \mR i$, $\theta \notin ti\mZ$ and introduce new variables $X'$ and labels $\theta'$ by $X=iX'$, $\theta=i\theta'$. As before, we consider the variational formulation with respect to these new real variables.

Due to the $2ti$-periodicity of $\Phi$ in the $\theta$-variable, we may restrict $\theta \in (-t,t)i$, still assuming that $\theta\neq 0$. Since the zeroes of $\vartheta_1$ are given by $\pi\mZ +ti\mZ/2$, there is an open strip $U'$ containing the imaginary line, such that $(u\pm i\theta)/2i$ is not a zero of $\vartheta_1$ for all $u\in U'$. Now, we can argue exactly as before and get the generalized discrete action functional given by Formula~(\ref{eq:action}) and \begin{equation}\label{eq:L_Q4E}
iL(X',Y';\theta')=I_1(X-Y;\theta)+I_1(X+Y;\theta).
\end{equation}
Here, we adapt the definition of $I_1$ appropriately. The factor of $i$ is due to the different integration variable $X'=-iX$.

Now, the Hessian of $S$ is essentially the same as the one in Formula~(\ref{eq:D2_Q4}), just ignore the factor $-i$ and replace $(dX\pm dY)^2$ by $(dX'\pm dY')^2$. By a similar argument as before, we cannot expect necessary and sufficient conditions for (strict) concavity or convexity, but at least sufficient ones. To find such conditions, we investigate the real-valued function \[f(u;\theta):=\tilde{\zeta}(iu-\theta')-\tilde{\zeta}(iu+\theta')=2\re(\tilde{\zeta}(iu-\theta'))\] for real $u$. If $f$ is positive for all $u$ and suitable $\theta'$, $S$ is strictly convex, and if $f$ is negative, $S$ is strictly concave.

Since $\tilde{\zeta}$ is an odd function, it suffices to consider the case $\theta'>0$. Using that $\tilde{\zeta}$ is $2\omega$-periodic by construction, $f(\cdot;\theta')$ is $2\pi$-periodic. As a consequence, we can restrict to $u\in [0,2\pi]$.
 
First, we want to show that for fixed $u$, there exists a positive real number $\rho(u)$ such that $f(u;\theta')$ has the same sign for all $\theta'\in (0,\rho(u))$. Indeed, if $u$ is neither zero nor $2\pi$, $f(u;\cdot)$ is holomorphic around zero, nonconstant, and has a zero at zero. Since zeroes of holomorphic functions lie discrete in $\mC$, the existence for such a number $\rho$ follows. If $u=0$, $f(0;\cdot)$ has a pole at zero, but is holomorphic around zero, so the claim follows.

Noting that $f(\cdot;\cdot)$ is continuous in both variables (as long as we do not hit a pole), we can choose $\rho(u)$ depending continuously on $u$. Taking the minimum of $\rho(u)$ for all $u\in [0,2\pi]$, we get a positive number $r=r(t)$ such that $f(u;\theta')$ does not change its sign for all real $u$ and $\theta'\in (0,r(t))$. By investigating $f(0;\theta')$ one gets that $f(u;\theta')<0$ in these cases. Therefore, $S$ is strictly concave if $\theta'\in (0,r(t))$, i.e., if $\theta\in (0,r(t))i$.

\begin{remark}
We can deal with the reality condition $X\in \mR i$ if $x \in V_1$, $X \in \mR i - t$ if $x \in V_2$, and $\theta \in i(\mR\backslash \mZ t)$, where $V_1 \dot{\cup} V_2$ is some partition of $V(\cG_W)$ into two vertex sets, in the same way as we did for $X\in \mR$ if $x \in V_1$, $X \in \mR - \pi i$ if $x\in V_2$, noting that $\Phi$ is $2ti$-periodic in $\theta$.
\end{remark}

To conclude this subsection, we want to comment on rhombic period lattices, i.e., a half-period ratio $\tau$ of length one. In this setting, the original formulation of the quad-equation $(\textnormal{Q}4)$ given by the first author, Adler, and Suris in \cite{ABS03} seems to be easier to handle. Their three-leg function $\Phi$ is given by 
\begin{equation*}
 \Phi(X,Y;\theta)=\frac{\sigma(X-Y+i\theta) \sigma(X+Y+i\theta)}{\sigma(X-Y-i\theta) \sigma(X+Y-i\theta)}.
\end{equation*}
Here, $i\theta:=A-B$, where $A,B$ are certain values fulfilling $\wp(A)=\alpha, \wp(B)=\beta$.

Let $\omega$ and $\omega^\prime:=\tau \omega$ be half-periods of $\wp$ such that $\Omega:=\omega+\omega^\prime \in \mR^+$. Again, we start with the reality condition $X,Y,\theta_e \in \mR$ for all edges $e=(x,y)\in E(\cG_W)$, $\theta_e \notin \Omega\mZ$. The derivation of the generalized discrete action functional is similar to our consideration above, so we just give its Hessian
\begin{align*}
D^2S=&-i\sum\limits_{e=(x,y)\in E(\cG_W)} \left\{\zeta(X-Y+\theta_e)-\zeta(X-Y-\theta_e)\right\} (dX-dY)^2\\
&-i\sum\limits_{e=(x,y)\in E(\cG_W)}\left\{\zeta(X+Y+\theta_e)-\zeta(X+Y-\theta_e)\right\} (dX+dY)^2.
\end{align*}

Now, there exists $r=r(\tau)>0$ such that $S$ is strictly concave if $\theta_e\in (0,r)$ for all $e \in E(\cG_W)$ and strictly convex if $\theta_e\in (-r,0)$ for all edges $e$. The proof is almost identical to the one we gave in our investigation of strict concavity of the functional given by Formula~(\ref{eq:L_Q4E}) above. Just note that the periodicity of $\tilde{\zeta}$ is replaced by $\zeta(u+2\Omega)=\zeta(u)+2\zeta\left(\omega\right)+2\zeta\left(\omega^\prime\right)$.

In the very same way as we did for the reality conditions in the case of rectangular period lattices, we can consider the reality condition $X,Y,\theta_e \in \mR i$, and we may shift the variables for some of the vertices by $-\Omega$ or $\omega'-\omega$. The results are similar; however, we only get sufficient conditions for strict concavity for $\theta$ close to zero.


\subsection{Limit cases} \label{sec:limit}

To conclude this section, we want to demonstrate how the situation of $(\textnormal{Q}3)_{\delta=1}$ can be described as the limit of the setting in $(\textnormal{Q}4)$ when the (purely imaginary) half-period ratio $\tau$ goes to infinity.

As in Section~\ref{sec:Q4}, let $\tau=-t/(2\pi i)$. For simplicity, we just describe the reality condition $X,Y,\theta_e\in\mR$ for all edges $e$. We already noted in Sections~\ref{sec:Q31} and~\ref{sec:Q4} that the shifting $X \to X-\pi i$ for some vertices $x$ yields almost the same generalized discrete action functionals and we observed that the formulae for the reality condition $X,Y,\theta_e\in\mR i$ were almost identical (however, the conditions for convexity or concavity were different).

Since $\sn \rightarrow \sin$, the formulation of the quad-equation $(\textnormal{Q}4)$ given in the appendix converges to $(\textnormal{Q}3)_{\delta=1}$. The same is true for the three-leg functions $\Phi$ given in (\ref{eq:Phi_Q4}) and (\ref{eq:Phi_Q31}), observing that $\vartheta_1(v)/\vartheta_1(v')$ converges to $\sinh(iv)/\sinh(iv')$. Also, $\varphi$ defined in (\ref{eq:phi_Q4}) converges to $\varphi$ defined in (\ref{eq:phi_Q31}).

Equation~(\ref{eq:second_derivative}) shows that \[i\tilde{\zeta}(X\pm Y +i\theta)-i\tilde{\zeta}(X\pm Y -i\theta) \rightarrow \frac{ \sin(\theta)}{\cosh(X\pm Y)-\cos(\theta)}.\]
Thus, the Hessian of the generalized discrete action functional $S_{(\textnormal{Q}4)}$ given in (\ref{eq:D2_Q4}) converges to the Hessian of $S_{(\textnormal{Q}3)_{\delta=1}}$ given in (\ref{eq:D2_Q31}).

Now, the limit of $L_{(\textnormal{Q}4)}$ defined in (\ref{eq:L_Q4}) satisfies Conditions~(\ref{eq:symmetry}) and~(\ref{eq:derivative}) for the additive three-leg function of ${(\textnormal{Q}3)_{\delta=1}}$. But the same does $L_{(\textnormal{Q}3)_{\delta=1}}$ defined as the corresponding summand in $S_{(\textnormal{Q}3)_{\delta=1}}$ given in (\ref{eq:action_Q31}). In particular, their difference has to be constant in $X$ and $Y$. Evaluating at $X=Y=0$ we obtain
\begin{align*}
L_{(\textnormal{Q}4)}(0,0;i\theta)&=0,\\
L_{(\textnormal{Q}3)_{\delta=1}}(0,0;i\theta)&=4 \im \Li (\exp(i\theta)).
\end{align*}
As a consequence,
\begin{equation*}
S_{(\textnormal{Q}4)}\rightarrow S_{(\textnormal{Q}3)_{\delta=1}}-\sum\limits_{e=(x,y)\in E(\cG_W)}4 \im \Li (\exp(i\theta_e)).
\end{equation*}

In the end, we comment shortly on the more delicate limit from $(\textnormal{Q}3)_{\delta=1}$ to $(\textnormal{Q}3)_{\delta=0}$. If we let all variables $X$ go to $-\infty$, but preserve their differences, then \[\frac{\sinh\left(\frac{X+Y+i\theta}{2}\right)}{\sinh\left(\frac{X+Y-i\theta}{2}\right)}\rightarrow \exp(-i\theta),\] so the three-leg function of $(\textnormal{Q}3)_{\delta=1}$ converges to the one of $(\textnormal{Q}3)_{\delta=0}$. Also, the Hessian of $S_{(\textnormal{Q}3)_{\delta=1}}$ converges to the Hessian of $S_{(\textnormal{Q}3)_{\delta=0}}$ as becomes immediate from Formulae~(\ref{eq:D2_Q31}) and~(\ref{eq:D2_Q30}). Having the relation of the generalized discrete action functionals to the circle pattern functionals of the first author and Springborn in \cite{BSp04} in mind, $X\to -\infty$ means that the radii of the hyperbolic circle pattern go to zero. This corresponds to seeing the Euclidean plane as a limit of hyperbolic planes with curvature going to 0.

Note that for the reality condition $X,Y,\theta_e\in\mR i$ for all $e\in E(\cG_W)$, we do not such have a limit since \[\frac{\sin\left(\frac{u+i\theta}{2}\right)}{\sin\left(\frac{u-i\theta}{2}\right)}\] does not converge if $u\to\pm \infty$.


\section[Existence and uniqueness of solutions of (Q3)- and (Q4)-DBVP]{Existence and uniqueness of solutions of (Q3)- and (Q4)-Dirichlet boundary value problems}\label{sec:Q_Dirichlet}

This section is devoted to the study of Dirichlet boundary value problems for $(\textnormal{Q}3)$ and $(\textnormal{Q}4)$, considering the reality condition $X,Y,\theta_e \in \mR$ for all edges $e=(x,y)$ and restricting to one of the cases $\exp(i\theta_e)\in S^1_+$ or $\exp(i\theta_e)\in S^1_-$. The reality conditions where some of the variables are shifted to $X-\pi i$ can be handled the same way.

The reason why we not consider the same problems for $(\textnormal{Q}1)$ and $(\textnormal{Q}2)$ or for the other reality conditions of $(\textnormal{Q}3)$ and $(\textnormal{Q}4)$ is that the conditions for convexity or concavity given in Theorem~\ref{th:main} cannot be achieved by an integrable labeling as we will see in Proposition~\ref{prop:incompatibility} of Section~\ref{sec:integrable_cases}. In contrast, the conditions for $(\textnormal{Q}3)$ and $(\textnormal{Q}4)$ above might be realized by an integrable labeling. Still, we do not require the labeling to be integrable in the following.

Note that Theorem~\ref{th:Q3} below is due to the first author and Springborn \cite{BSp04}, we just exploit the identification of the generalized discrete action functionals of $(\textnormal{Q}3)$ with the corresponding circle pattern functionals as explained in Sections~\ref{sec:Q30} and~\ref{sec:Q31}.

\begin{theorem}\label{th:Q3}
Let $\theta$ be a labeling of $E(\cG_W)$ such that $0<\theta_e<\pi$ for all edges $e$. Moreover, let $\Theta_x \in \mR^+$ be given. Then, the corresponding generalized discrete action functionals of $(\textnormal{Q}3)$ and $(\textnormal{Q}4)$ given by Formulae~(\ref{eq:action_Q30}) and~(\ref{eq:action_Q31}) have an extremum on the subspace $U$ defined in (\ref{eq:U}) or $\mR^{|V(\cG_W)|}$, respectively, if and only if the following two conditions are satisfied:
\begin{enumerate}
\item $\sum\limits_{e \in E(\cG_W)} 2\theta^*_e-\sum\limits_{x \in V(\cG_W)} \Theta_x$ is equal to zero in the case of $(\textnormal{Q}3)_{\delta=0}$ and greater than zero in the case of $(\textnormal{Q}3)_{\delta=1}$;
\item $\sum\limits_{e \in E^\prime} 2\theta^*_e-\sum\limits_{x \in V^\prime} \Theta_x>0$ for all nonempty $V^\prime \varsubsetneq V(\cG_W)$, where $E^\prime$ denotes the set of all edges incident to some vertex in $V^\prime$.
\end{enumerate}
The extremum is unique if it exists.
\end{theorem}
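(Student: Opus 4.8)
The plan is to treat existence and uniqueness separately. Uniqueness is immediate: under the hypotheses $0<\theta_e<\pi$ (so $\exp(i\theta_e)\in S^1_+$) the functionals~(\ref{eq:action_Q30}) and~(\ref{eq:action_Q31}) are strictly convex on $U$ and on $\mR^{|V(\cG_W)|}$ respectively by Theorem~\ref{th:main}, so a critical point, once it exists, is the unique global minimum. Everything therefore reduces to deciding when the strictly convex, finite-valued functional $S$ attains its infimum. For such a function on a finite-dimensional space this happens if and only if its recession function $S_\infty(\xi):=\lim_{t\to\infty}t^{-1}S(X_0+t\xi)$ is strictly positive for every nonzero $\xi$ in the relevant space, since $S_\infty(\xi)\le 0$ forces $S$ to be non-increasing along that ray with infimum not attained. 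To compute $S_\infty$ I would use the elementary asymptotics $\im\Li(\exp(w+i\theta))=\theta^*w+o(w)$ as $w\to+\infty$ and $\im\Li(\exp(w+i\theta))\to 0$ as $w\to-\infty$ (from $\tfrac{\partial}{\partial w}\im\Li(\exp(w+i\theta))=-\arg(1-\exp(w+i\theta))$, which tends to $\theta^*$ and to $0$). Hence for each edge $e=(x,y)$ the pair $\im\Li(\exp(\pm(X-Y)+i\theta_e))$ contributes $\theta_e^*|\xi_x-\xi_y|$, and in the $\delta=1$ case the pair $\im\Li(\exp(\pm(X+Y)+i\theta_e))$ contributes $\theta_e^*|\xi_x+\xi_y|$. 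Collecting the linear terms gives
\[ S_\infty(\xi)=\sum_{e=(x,y)}\theta_e^*|\xi_x-\xi_y|-\sum_{e=(x,y)}\theta_e^*(\xi_x+\xi_y)+\sum_x\Theta_x\xi_x \]
in the case $(\textnormal{Q}3)_{\delta=0}$, and $S_\infty(\xi)=\sum_e\theta_e^*\big(|\xi_x-\xi_y|+|\xi_x+\xi_y|\big)+\sum_x\Theta_x\xi_x$ in the case $(\textnormal{Q}3)_{\delta=1}$.

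For $(\textnormal{Q}3)_{\delta=1}$ I would exploit the identity $|a-b|+|a+b|=2\max(|a|,|b|)$, so that $S_\infty(\xi)=\sum_e 2\theta_e^*\max(|\xi_x|,|\xi_y|)+\sum_x\Theta_x\xi_x$. As $\Theta_x>0$ and the first sum depends only on $|\xi_x|$, replacing any $\xi_x>0$ by $-\xi_x$ never increases $S_\infty$; it therefore suffices to prove positivity for $\xi\le 0$, and writing $\eta=-\xi\ge0$ reduces to $\sum_e 2\theta_e^*\max(\eta_x,\eta_y)-\sum_x\Theta_x\eta_x$. Applying the layer-cake formula to the superlevel sets $V_s=\{x:\eta_x>s\}$, using $\max(\eta_x,\eta_y)=\int_0^\infty\mathbf 1[e\text{ incident to }V_s]\,ds$ and $\eta_x=\int_0^\infty\mathbf 1[x\in V_s]\,ds$, yields
\[ S_\infty(\xi)=\int_0^\infty g(V_s)\,ds,\qquad g(V'):=\sum_{e\in E'}2\theta_e^*-\sum_{x\in V'}\Theta_x, \]
with $E'$ as in the theorem. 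Conditions 1 and 2 say precisely that $g(V')>0$ for every nonempty $V'$ (the full set by condition 1, proper subsets by condition 2), and since $\eta\ne0$ forces $V_s$ to be nonempty on a set of positive measure, $S_\infty(\xi)>0$ follows. Conversely, if $g(V')\le0$ for some $V'$, the indicator direction $\xi=-\mathbf 1_{V'}$ gives $S_\infty(\xi)=g(V')\le0$, so the conditions are necessary as well.

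For $(\textnormal{Q}3)_{\delta=0}$, condition 1 is the scaling identity~(\ref{eq:scaling_Q30}), i.e.\ exactly the translation invariance $X\mapsto X+h$ permitting the restriction to $U$. Setting $c_x:=\Theta_x-\sum_{e\ni x}\theta_e^*$ one has $\sum_x c_x=0$ by condition 1, and the linear part of $S_\infty$ becomes $\sum_x c_x\xi_x$. The signed coarea formula with $V_s=\{x:\xi_x>s\}$ over all $s\in\mR$ then gives $S_\infty(\xi)=\int_{-\infty}^\infty h(V_s)\,ds$, where $h(V'):=\sum_{e\in\partial V'}\theta_e^*+\sum_{x\in V'}c_x$ and $\partial V'$ is the set of edges with exactly one endpoint in $V'$. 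A short bookkeeping computation using condition 1 shows $h(W)=g(V\setminus W)$, so that $h(\emptyset)=h(V)=0$ while $h(W)>0$ on every proper nonempty $W$ by condition 2; since a non-constant $\xi\in U$ produces proper nonempty $V_s$ on a set of positive measure, again $S_\infty(\xi)>0$, and the same indicator directions establish necessity.

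The main obstacle is the combinatorial reduction carried out in the last two paragraphs: passing from positivity of $S_\infty$ on the whole (sub)space to the finite list of subset inequalities. The identity $|a-b|+|a+b|=2\max(|a|,|b|)$ together with the coarea/layer-cake decomposition is exactly what collapses the continuum of test directions to the indicator vectors $-\mathbf 1_{V'}$, and verifying the two auxiliary identities (the evaluation $S_\infty(-\mathbf 1_{V'})=g(V')$ and the complementation relation $h(W)=g(V\setminus W)$) is the technical heart of the argument. Through the identification of~(\ref{eq:action_Q30}) and~(\ref{eq:action_Q31}) with the Euclidean and hyperbolic circle pattern functionals of \cite{BSp04}, this recovers the classical feasibility conditions for circle patterns with prescribed cone angles.
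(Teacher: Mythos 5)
Your proof is correct, but it takes a genuinely different route from the paper. The paper's proof is essentially a one-paragraph reduction: via the identifications in Sections~\ref{sec:Q30} and~\ref{sec:Q31} (radii substitutions~(\ref{eq:connection_eucl}),~(\ref{eq:connection_hyp}) and $2\phi_e^x=-\varphi$), the functionals~(\ref{eq:action_Q30}) and~(\ref{eq:action_Q31}) are the Euclidean and hyperbolic circle pattern functionals of \cite{BSp04}, and the existence criterion is quoted from there, where it is proved using coherent angle systems --- a combinatorial feasibility certificate of flow/LP-duality type, which the paper itself describes as ``relatively complicated.'' You instead give a direct, self-contained convex-analysis argument: strict convexity (Theorem~\ref{th:main}) reduces attainment of the minimum to strict positivity of the recession function $S_\infty$ on nonzero directions (on $U$ for $\delta=0$, after observing that condition~1 is exactly the translation-invariance~(\ref{eq:scaling_Q30}) and is forced by the constant direction), the asymptotics $\tfrac{\partial}{\partial w}\im\Li(\exp(w+i\theta))=-\arg(1-\exp(w+i\theta))\to\theta^*$ resp.\ $0$ yield an explicit piecewise-linear $S_\infty$, and the identity $|a-b|+|a+b|=2\max(|a|,|b|)$ plus the layer-cake decomposition over superlevel sets collapses positivity of $S_\infty$ to exactly the finite family of inequalities $g(V')>0$, with necessity read off from the indicator directions $-\mathbf{1}_{V'}$; I checked the two bookkeeping identities ($S_\infty(-\mathbf{1}_{V'})=g(V')$ and, using condition~1, $h(W)=g(V(\cG_W)\setminus W)$) and they hold. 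The paper's approach buys brevity and the geometric dividend that the conditions are literally the circle-pattern realizability conditions; yours buys a shorter, elementary and self-contained existence proof that works directly at the level of the action functionals (in the spirit of the coercivity estimate in the proof of Theorem~\ref{th:Dirichlet_Q31}) and makes transparent which test direction produces each inequality. Two points deserve one explicit line each if written up: that $S_\infty(\xi)=0$ together with strict convexity already excludes attainment (so the criterion is an equivalence), and that in the $\delta=0$ case the recession along $-\mathbf{1}_{V'}$ agrees with that along its projection to $U$ by translation invariance.
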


\begin{proof}
In Sections~\ref{sec:Q30} and~\ref{sec:Q31} we have seen that one can identify the generalized discrete action functionals in the case of $(\textnormal{Q}3)$ with the circle pattern functionals of \cite{BSp04} under certain conditions (e.g., $0<\theta<\pi$ and $\Theta>0$). Thus, we can adapt the first author's and Springborn's result in \cite{BSp04} to our setting. They gave a complete answer to the question when extrema of these functionals exist. Their proof consisted of two steps. First, they showed that existence of an extremum is equivalent to the existence of so-called coherent angle systems; second, they used the feasible flow theorem of network theory to prove that coherent angle systems exist if and only if the conditions of the theorem hold. We refer the reader to their paper \cite{BSp04} for more details.

Note that the uniqueness of the extremum follows from strict concavity of the functionals.
\end{proof}

\begin{remark}
Using Equations~(\ref{eq:remark_Q30}) and~(\ref{eq:remark_Q31}), we get an analogous theorem in the case of $\pi<\tilde{\theta}_e<2\pi$, replacing $\theta_e=\pi -s_e$ by $\tilde{\theta}_e=\pi +s_e$ and choosing $\Theta_x=-\tilde{\Theta}_x$.
\end{remark}

\begin{definition}
We consider the generalized discrete action functional $S$ corresponding to the reality condition $X,Y,\theta_e \in \mR$ in the case of $(\textnormal{Q}3)$ or $(\textnormal{Q}4)$. A \textit{Dirichlet boundary value problem} asks for the existence of a critical point of $S$ under the constraint that the variables corresponding to some preassigned vertices of $\cG_W$ are fixed.
\end{definition}

Note that in the definition above, we do not require that the preassigned vertices lie on the boundary of $\cG_W$. In the nongeometric case, the situation simplifies compared to Theorem~\ref{th:Q3} if all $\Theta$ vanish. The proof follows the lines of the first author's and Springborn's proof of Theorem~\ref{th:Q3}. But now, coherent angle systems are not necessary any more, since all $\Theta$ are zero.

\begin{theorem}\label{th:Dirichlet_Q31}
Let $\theta$ be a labeling of $E(\cG_W)$ such that $\exp(i\theta_e) \in S^1_+$ for all edges $e$ or $\exp(i\theta_e) \in S^1_-$ for all edges $e$. Then, the Dirichlet boundary value problem corresponding to $(\textnormal{Q}3)_{\delta=1}$ is uniquely solvable if $\Theta_x=0$ for all vertices $x$ corresponding to a nonfixed variable.
\end{theorem}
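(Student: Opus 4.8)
The plan is to combine the strict convexity of the functional~(\ref{eq:action_Q31}) proved in Section~\ref{sec:Q31} with a coercivity estimate obtained from the asymptotics of the dilogarithm. First I would reduce to the case $\exp(i\theta_e)\in S^1_+$, that is $\theta_e\in(0,\pi)$ for every edge. By the symmetry~(\ref{eq:remark_Q31}), namely $\varphi(X,Y;\pi+t)=-\varphi(X,Y;\pi-t)$, the case $\exp(i\theta_e)\in S^1_-$ is turned into this one upon replacing every $\Theta_x$ by $-\Theta_x$; since the hypothesis $\Theta_x=0$ at the free vertices is preserved, no generality is lost. Write $B\subset V(\cG_W)$ for the set of fixed (boundary) vertices, with prescribed values $X_x=c_x$, and $F:=V(\cG_W)\setminus B$ for the free ones. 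A solution of the Dirichlet problem is exactly a critical point of $S$ on the affine subspace $A:=\{(X_x):X_x=c_x\text{ for }x\in B\}$. By~(\ref{eq:D2_Q31}) the functional is smooth and, for $\theta_e\in(0,\pi)$, strictly convex on all of $\mR^{|V(\cG_W)|}$; restricting the positive-definite Hessian to the directions tangent to $A$ (those with $dX_x=0$ for $x\in B$) keeps it positive definite, so $S|_A$ is smooth and strictly convex. Uniqueness of the critical point is therefore automatic, and the entire problem reduces to proving that $S|_A$ attains a minimum, for which it suffices to show that $S$ is coercive in the free variables.

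For the coercivity I would analyse the single-edge summand of~(\ref{eq:action_Q31}), $\Lambda(X,Y;\theta)=\im\Li(\exp(X-Y+i\theta))+\im\Li(\exp(Y-X+i\theta))+\im\Li(\exp(X+Y+i\theta))+\im\Li(\exp(-X-Y+i\theta))$. Setting $g(u):=\im\Li(\exp(u+i\theta))$, the defining power series gives $g(u)\to 0$ as $u\to-\infty$, while the inversion formula $\Li(z)+\Li(1/z)=-\pi^2/6-\tfrac12\log^2(-z)$ applied to $z=\exp(u+i\theta)$ gives $g(u)=\theta^*u+o(1)$ as $u\to+\infty$, where $\theta^*=\pi-\theta>0$. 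Hence $g(u)+g(-u)$ is a continuous even function with $g(u)+g(-u)-\theta^*|u|$ tending to finite limits at $\pm\infty$, so it is bounded below; applying this with $u=X-Y$ and with $u=X+Y$ yields a constant $C=C(\theta)$ with $\Lambda(X,Y;\theta)\ge\theta^*(|X-Y|+|X+Y|)-C$. Since $|X-Y|+|X+Y|=2\max(|X|,|Y|)\ge|X|+|Y|$, summing over the edges and using that the linear term $\sum_x\Theta_xX_x$ is constant on $B$ and vanishes on $F$, I obtain $S\ge\theta^*_{\min}\sum_{x\in V(\cG_W)}\deg_{\cG_W}(x)|X_x|-C'\ge\theta^*_{\min}\sum_{x\in V(\cG_W)}|X_x|-C'$, with $\theta^*_{\min}:=\min_e\theta^*_e>0$. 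As $\cG_W$ is connected and has at least two vertices, every vertex has positive degree, so this bound is coercive in all variables, in particular in the free ones.

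It then follows that the smooth, strictly convex, coercive function $S|_A$ attains a unique global minimum on $A$; being the unique critical point, and $S$ satisfying the Euler--Lagrange equations~(\ref{eq:Laplace}), this minimizer is exactly the unique solution of the Laplace equations at the free vertices with $\Theta_x=0$, which is the asserted unique solution of the Dirichlet problem. The main obstacle is the coercivity estimate of the second paragraph: the reduction and convexity are formal, but one must control the dilogarithm carefully, in particular verify the exact linear growth rate $\theta^*$ at $+\infty$ and that the per-edge lower bound holds with a single additive constant, so that coercivity is uniform along every direction in which the free variables escape to infinity. It is worth emphasising that, in contrast to $(\textnormal{Q}3)_{\delta=0}$, the terms involving $X+Y$ eliminate the translational degeneracy; this is precisely what makes $S$ coercive on the \emph{whole} of $\mR^{|V(\cG_W)|}$ and allows the argument to proceed without any scaling constraint such as~(\ref{eq:scaling_Q30}), explaining why the Dirichlet problem for $\delta=1$ is unconditionally solvable.
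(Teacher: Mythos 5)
Your proposal is correct and follows essentially the same route as the paper: uniqueness from the strict convexity established in Section~\ref{sec:Q31}, and existence from a coercivity bound of the form $\Lambda(X,Y;\theta)\geq\theta^{*}\bigl(|X-Y|+|X+Y|\bigr)-C$ per edge, obtained from the linear growth $\im\Li(\exp(u+i\theta))=\theta^{*}u+o(1)$ of the dilogarithm terms, with the linear $\Theta$-terms harmless since $\Theta_x=0$ at free vertices. Your derivation is in fact slightly more careful than the paper's: you justify the growth rate via the inversion formula rather than asserting the inequality, and you correctly use $|X-Y|+|X+Y|=2\max(|X|,|Y|)\geq|X|+|Y|$, whereas the paper's final display writes $2\min(|X|,|Y|)$, which is a slip ($\min$ alone would not force coercivity when a single variable escapes to infinity).
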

\begin{proof}
The uniqueness follows from strict concavity or convexity of the generalized discrete action functional $S$ due to Theorem~\ref{th:main}. By Equation~(\ref{eq:remark_Q31}), we may restrict to $0<\theta_e<\pi$ for all edges $e$, such that $S$ is strictly concave. Then, for some constant $C$ given by the contributions of $\Theta_x X$ for fixed variables $X$, $S+C$ is equal to
\begin{align*}
&-\sum\limits_{e=(x,y)\in E(\cG_W)} \left\{\im \Li(\exp(X-Y+i \theta_e))+\im \Li(\exp(Y-X+i \theta_e))\right\}\\
&-\sum\limits_{e=(x,y)\in E(\cG_W)}\left\{\im \Li(\exp(X+Y+i \theta_e))+\im \Li(\exp(-X-Y+i \theta_e))\right\}.
\end{align*}

Now, for any $u \in \mR$ and $0<\theta<\pi$, we have
\begin{equation*}
\im \Li(\exp(u+i \theta))+\im \Li(\exp(-u+i \theta))>(\pi - \theta) \left|u\right|.
\end{equation*}

As a consequence, we obtain
\begin{align*}
S&<-\sum\limits_{e=(x,y)\in E(\cG_W)} (\pi - \theta_e) (\left|X-Y\right|+\left|X+Y\right|)-C\\
&\leq -\min_{e\in E(\cG_W)} (\pi-\theta_e) \sum\limits_{e=(x,y)\in E(\cG_W)} 2\max(\left|X\right|,\left|Y\right|)-C.
\end{align*}
In particular, $S  \to -\infty$ if the absolute value of some variable $X$ goes to infinity. Thus, a maximum of the strictly concave functional $X$ exists.
\end{proof}

\begin{remark}
In Section~\ref{sec:Q30}, we have seen that $\Theta_x>0$ (or $\Theta_x<0$) for all vertices $x \in V(\cG_W)$ is necessary to obtain solutions of the discrete Laplace-type Equation~(\ref{eq:Laplace}) corresponding to $(\textnormal{Q}3)_{\delta=0}$ if $\exp(i\theta_e) \in S^1_+$ (or $\exp(i\theta_e) \in S^1_-$) for all edges $e$. On that account, Theorem~\ref{th:Dirichlet_Q31} would not be reasonable in the case of $(\textnormal{Q}3)_{\delta=0}$.
\end{remark}

In the case of the most general quad-equation $(\textnormal{Q}4)$, it turns out that the corresponding Dirichlet boundary value problem is always solvable.

\begin{theorem}\label{th:Dirichlet_Q4}
Let $\theta$ be a labeling of $E(\cG_W)$ such that $\exp(i\theta_e) \in S^1_+$ for all white edges $e\in E(\cG_W)$ or $\exp(i\theta_e) \in S^1_-$ for all edges $e$. Then, the Dirichlet boundary value problem corresponding to $(\textnormal{Q}4)$ is uniquely solvable.
\end{theorem}
\begin{proof}
By Theorem~\ref{th:main}, the generalized discrete action functional $S$ given by Formula~(\ref{eq:L_Q4}) is strictly concave or strictly convex. Hence, a extremum is unique if it exists. In the following, we will restrict to the case that $S$ is strictly concave, the other case can be dealt with in exactly the same way.

As in Section~\ref{sec:Q4}, let $\tau=-t/(2\pi i)$ denote the half-period ratio of the Jacobi theta function $\vartheta_1$. Then,
\[\frac{\partial \varphi}{\partial X}(X,Y;\theta)=-i(\tilde{\zeta}(X+Y+i \theta)-\tilde{\zeta}(X+Y-i \theta))-i(\tilde{\zeta}(X-Y+i \theta)-\tilde{\zeta}(X-Y-i \theta))\]
is a smooth and negative function that is $2t$-periodic in both variables $X$ and $Y$. $\partial \varphi/\partial Y$ is also a smooth and double-periodic function. It follows that both derivatives of $\varphi$ are bounded, so $\varphi(\cdot,\cdot;\theta)$ is Lipschitz-continuous. Since the set of edges is finite, we may choose the Lipschitz constant $L$ uniformly for all $\theta_e$.

A straightforward calculation shows that for all $\theta \in (0,\pi]$,
\begin{align}
\Phi(X+2t,Y;\theta)&=\Phi(X,Y;\theta) \exp(8i\theta), \label{eq:Phi_X}\\
\Phi(X,Y+2t;\theta)&=\Phi(X,Y;\theta). \label{eq:Phi_Y}
\end{align}
Let us choose $\mu=\mu(X,Y;\theta), \eta=\eta(X,Y;\theta)$ in such a way that
\begin{align*}
\varphi(X+2t,Y;\theta)-\varphi(X,Y;\theta)&=\mu(X,Y;\theta),\\
\varphi(X,Y+2t;\theta)-\varphi(X,Y;\theta)&=\eta(X,Y;\theta).
\end{align*}
By continuity and Equations~(\ref{eq:Phi_X}) and~(\ref{eq:Phi_Y}), $\eta$ is constant and $\mu=\mu(\theta)$ depends continuously only on $\theta \in (0,\pi]$. Now, $\partial \varphi/\partial X<0$, so $\mu<0$; and $\Phi(X,Y;\pi) \equiv 1$, so $\eta=0$.

Let us suppose that the Dirichlet boundary value problem is not solvable. Then, there exists a sequence \[(\{X_j(n)\}_{j=1}^v)_{n=0}^\infty\] of variables (some of them fixed), $v=\left|V(\cG)\right|$, such that at least one variable is unbounded and such that the sequence \[(S(n))_{n=0}^\infty, \quad S(n):=S(X_1(n),\ldots,X_v(n)),\] is strictly increasing. In the following, we will subsequently pass to subsequences with certain properties and for the ease of notation, we will use the same indexing for the subsequences as for the original sequence.

We enumerate the $X$-variables in such a way that the $X_1$-variable is unbounded. By passing to a subsequence, we can achieve that the sequence of $X_1(n)$ is strictly increasing or decreasing. Similarly, we can assume for any $j$ that $X_j(n)$ either converges to a constant $a_j$ or strictly monotonously to $\pm \infty$. Note that fixed variables $X_j$ are just constantly $a_j$.

For a variable $X_j$ such that $X_j(n) \rightarrow \pm \infty$, consider the induced sequence on the compact circle $\mR/2t\mZ$. Then, we can choose a convergent subsequence, i.e., there exist $a_j \in \mR$ and a sequence $(m_j(n))_{n=0}^\infty$ of integers such that \[X_j(n)-2m_j(n)t \rightarrow a_j.\] We may suppose that $m_j(n) \rightarrow \pm \infty$ strictly monotonously. Using the same notation, we can include the case that $X_j(n)$ converges to a constant by setting $m_j(n)=0$ for all $n$.

Now, let $0<\varepsilon<2t$ be fixed. By passing to a subsequence, we can assume that 
\begin{equation*}
\left| X_j(n)-2m_j(n)t-a_j \right| < \varepsilon
\end{equation*}
for all $j$ and $n$. By our consideration above, we have for any edge $e=(x_j,x_k)$:
\begin{align*}
\varphi(X_j(n),X_k(n^\prime);\theta_e)&=\varphi(a_j+2m_j(n)t,a_k+2m_k(n^\prime)t;\theta_e)+B\\
&=\varphi(a_j,a_k;\theta_e)+\mu(\theta_e)m_j(n)+B
\end{align*}
for some real number $B$ depending on $e,j,k,n,n'$ satisfying $\left|B\right|<L\sqrt{2\varepsilon^2}$. More general, we see for any real $X=a+2pt+q$, where $p \in \mZ$ and $\left|q\right|<2t$, that
\begin{equation*}
\varphi(X,X_k(n^\prime);\theta_e)=\varphi(a,a_k;\theta_e)+\mu(\theta_e)p+B,
\end{equation*}
where $\left|B\right|<L\sqrt{\varepsilon^2+4t^2}$.

For a positive real number $M$ that we will fix later, we define
\begin{equation*}
p_{\text{max}}:=\frac{1}{\min\limits_{e \in E(\cG_W)}|\mu(\theta_e)|}\left(L\sqrt{\varepsilon^2+4t^2}+M+\max_{X_j(n) \text{ diverges}} \max_{e=(x_j,x_k)} \left|\varphi(a_j,a_k;\theta_e)\right|\right).
\end{equation*}
By the results we obtained so far, we get for an edge $e=(x_j,x_k)$ such that $X_j(n)$ diverges and $X=a+2pt+q$ as above with $\left|p\right|>p_{\text{max}}$:
\begin{equation*}
\delta\varphi(X,X_k(n^\prime);\theta_e) < -M,
\end{equation*}
where $\delta:=\textnormal{sgn}(p)$. By passing to a subsequence, we suppose that $\left|m_j(n)\right|>p_{\text{max}}$ for all $j$ such that $X_j(n)$ diverges. Now, define
\begin{equation*}
C:=\sqrt{2}L\varepsilon+\max_{X_j(n) \text{ converges}} \max_{e=(x_j,x_k)} \left|\varphi(a_j,a_k;\theta_e)\right|.
\end{equation*}
Then, $\left|\varphi(X,X_k(n^\prime);\theta_e)\right|< C$ for any edge $e=(x_j,x_k)$ such that $X_j(n)$ converges, provided that $\left|X-a_j\right|<\varepsilon$. Finally, define \[E:=\max\limits_{x \in V(\cG_W)} \left|\Theta_x\right|, \tilde{C}:=C+E, \tilde{M}:=M-E.\] We will choose $M$ later in such a way that $\tilde{M}>0$.

As a short summary of our consideration above, we have shown that
\begin{align*}
&\delta\frac{\partial S}{\partial X_j} (X_1(n+1),\ldots,X_{j-1}(n+1),X_j,X_{j+1}(n),\ldots,X_v(n))\\
=&\delta\left(\sum_{e=(x_j,x_k) \textnormal{ edge}}\varphi(X_j,X_k(n \text{ or } n+1);\theta_e)- \Theta_{x_j}\right)\\
<& -\tilde{M}
\end{align*}
if $X_j \in \text{conv}\{X_j(n),X_j(n+1)\}$ and $X_j(n) \rightarrow \delta \infty$, $\delta=\pm 1$; and
\begin{align*}
&\left|\frac{\partial S}{\partial X_j}(X_1(n+1),\ldots,X_{j-1}(n+1),X_j,X_{j+1}(n),\ldots,X_v(n))\right|\\
=&\left|\sum_{e=(x_j,x_k) \textnormal{ edge}}\varphi(X_j,X_k(n \text{ or } n+1);\theta_e)+ \Theta_{x_j}\right|\\
< &\tilde{C}v
\end{align*}
if $X \in \text{conv}\{X_j(n),X_j(n+1)\}$ and $X_j(n) \rightarrow a_j$.

Without loss of generality, we enumerate the vertices in such a way that we have $X_j(n) \rightarrow \pm \infty$ if and only if $j \leq k$. If we integrate along the piecewise straight path \begin{align*}
(X_1(n),\ldots,X_v(n))&-(X_1(n+1),X_2(n),\ldots,X_v(n))\\
&-(X_1(n+1),X_2(n+1),\ldots,X_v(n))\\
&-\ldots\\
&-(X_1(n+1),\ldots,X_v(n+1)),
\end{align*} we obtain
\begin{align*}
S_{n+1}-S_n&<-\tilde{M} \sum_{j=1}^k \left|X_j(n+1)-X_j(n)\right|+\tilde{C}v\sum_{j=k+1}^v \left|X_j(n+1)-X_j(n)\right|\\
&<-\tilde{M}\left|X_1(n+1)-X_1(n)\right|+\tilde{C}v(v-k)2\varepsilon.
\end{align*}

Again, we can achieve $\left|X_1(n+1)-X_1(n)\right|\geq 1$ for all $n$ by passing to a subsequence. After all, we choose $M$ large enough such that $M>E+2\tilde{C}v(v-k)\varepsilon$. But then, $S_{n+1}-S_n<0$, contradicting our assumption that $S(n)$ is increasing. Consequently, $S$ has a (unique) maximum.
\end{proof}

\begin{remark}
Although the concavity condition we gave for rhombic lattices in the end of Section~\ref{sec:Q4} is not integrable, we would like to mention that Theorem~\ref{th:Dirichlet_Q4} also holds for unitary half-periods $\tau$. For this, we use that
\begin{equation*}
\sigma(u+2\Omega)=-\exp(2\eta^\prime(u+\Omega))\sigma(u),
\end{equation*}
where $\omega$ and $\omega^\prime:=\tau \omega$ are half-periods of $\wp$ such that $\Omega:=\omega+\omega^\prime \in \mR^+$ and $\eta^\prime=\zeta\left(\omega\right)+\zeta\left(\omega^\prime\right)$. This gives results analogous to Equations~(\ref{eq:Phi_X}) and~(\ref{eq:Phi_Y}), such that the same arguments work in the rhombic case as well.
\end{remark}


\section{Integrable cases}\label{sec:integrable_cases}

In this section, we want to discuss when a labeling of $E(\cG)$ fulfilling the conditions for concavity or convexity in Theorem~\ref{th:main} is integrable. First, we observe that the conditions $\pm\theta_e>0$ and $\pm i\theta_e>0$ are never integrable:

\begin{figure}[htbp]
\begin{center}
\beginpgfgraphicnamed{induced-labeling}
\begin{tikzpicture}
[black/.style={circle,draw=black,fill=white,thin,inner sep=0pt,minimum size=1.2mm},
white/.style={circle,draw=black,fill=black,thin,inner sep=0pt,minimum size=1.2mm}]
\node[black] (x) [label=below left:$x$] at (0,0) {};
\node[white] (u)  at (2,0) {};
\node[black] (y) [label=above right:$y_k$] at (2,2) {};
\node[white] (v)  at (0,2) {};
\draw[dashed] (x) -- node[midway,right] {$\theta_e$} (y);
\draw (x) -- node[midway,below] {$\alpha_k$} (u) -- node[midway,right] {$\alpha_{k+1}$} (y)-- node[midway,above] {$\alpha_k$} (v)-- node[midway,left] {$\alpha_{k+1}$} (x);
\end{tikzpicture}
\endpgfgraphicnamed
\caption{Induced labeling}
\label{fig:ind_labeling}
\end{center}
\end{figure}
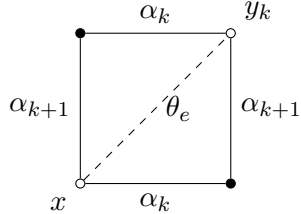

\begin{proposition}\label{prop:incompatibility}
If the quad-graph $\cG$ contains at least one inner white vertex, there is no integrable labeling of $E(\cG_W)$ such that $\delta\theta_e>0$ for all edges $e$ of $\cG_W$ and some fixed nonzero $\delta \in \mC$.
\end{proposition}
\begin{proof}
If we consider all edges $e=(x,y_k)$ incident to an inner white vertex $x$ and $\theta_e=\alpha_k-\alpha_{k+1}$ is induced from a labeling of $E(\cG)$ as shown in Figure~\ref{fig:ind_labeling}, then
\begin{equation}\label{eq:zerosum}
 \sum\limits_{(x,y_k) \in E(\cG_W)} \theta_e = \sum\limits_{(x,y_k) \in E(\cG_W)} (\alpha_k-\alpha_{k+1}) = 0.
 \end{equation}
Hence, not all $\delta\theta$ can be positive.
\end{proof}

Therefore, only the conditions $\exp(i\theta_e) \in S^1_+$ for all edges $e$ or $\exp(i\theta_e) \in S^1_-$ for all $e\in E(\cG_W)$ can be integrable. Since the labeling $-\theta$ is integrable if and only if the labeling $\theta$ is integrable, we can restrict to the case  $\exp(i\theta) \in S^1_+$, i.e., $\theta \equiv (0,\pi) \mod 2\pi$. To have a short notation, we say that such a labeling is \textit{convex}.

In the following, we first determine all possible convex integrable labelings on some very special graphs and then, we will discuss convex integrable labelings on rhombic-embeddable quad-graphs.

\begin{figure}[htbp]
\begin{center}
\beginpgfgraphicnamed{Z2}
\begin{tikzpicture}
[white/.style={circle,draw=black,fill=white,thin,inner sep=0pt,minimum size=1.2mm},
black/.style={circle,draw=black,fill=black,thin,inner sep=0pt,minimum size=1.2mm}]
\foreach \x in {-3,...,3}
	\foreach \y in {-3,...,3}
		{\pgfmathparse{int(add(\x,\y))};
		\ifthenelse{\isodd{\pgfmathresult}}{\node[black] (p_\x_\y)  at (\x,\y) {};}{\node[white] (p_\x_\y)  at (\x,\y) {};}
		}
\foreach \x in {-3,...,3}
	\foreach \y in {-3,...,2}
		{\pgfmathparse{int(add(\y,1))};
		\ifthenelse{\equal{\x}{3}}{\draw (p_\x_\y) --node[midway,right]{$\beta_{\y}$} (p_\x_\pgfmathresult);}{\draw (p_\x_\y) -- (p_\x_\pgfmathresult);}
		}
\foreach \x in {-3,...,2}
	\foreach \y in {-3,...,3}
		{\pgfmathparse{int(add(\x,1))};
		\ifthenelse{\equal{\y}{-3}}{\draw (p_\x_\y) --node[midway,below]{$\alpha_{\x}$} (p_\pgfmathresult_\y);}{\draw (p_\x_\y) --  (p_\pgfmathresult_\y);}
		}
\foreach \x in {-3,...,3}
	\foreach \y in {-3,...,3}
		{
		\draw[dashed] (p_3_\y) -- +(1,0);
		\draw[dashed] (p_-3_\y) -- +(-1,0);
		\draw[dashed] (p_\x_3) -- +(0,1);
		\draw[dashed] (p_\x_-3) -- +(0,-1);
		}
\end{tikzpicture}
\endpgfgraphicnamed
\caption{$\mZ^2$}
\label{fig:Z2}
\end{center}
\end{figure}

Given a finite sub-quad-graph of $\mZ^2$, we consider only labelings of its edges that are actually induced from a labeling of all edges $\mZ^2$, see Figure~\ref{fig:Z2}. It is easy to check that all such labelings that induce a convex labeling are given by the following (up to multiples of $2\pi$ and adding a fixed number to all labels):
\begin{align*}
\beta_0 &=0,\\
\alpha_{2j} & \in (0,\pi),\\
\alpha_{2j+1} & \in (\pi,2\pi),\\
\beta_{2j+1} & \in (\max\limits_{k \in M_j} \alpha_{2k},\min\limits_{k \in M_j} \alpha_{2k}+\pi) \cap (\max\limits_{k \in N_j} \alpha_{2k+1}-\pi,\min\limits_{k \in N_j} \alpha_{2k+1}),\\
\beta_{2j} & \in \{(\max\limits_{k \in M^\prime_j} \alpha_{2k} +\pi,2 \pi) \cup [0,\min\limits_{k \in M^\prime_j} \alpha_{2k})\} \cap \{(\max\limits_{k \in N^\prime_j} \alpha_{2k+1},2 \pi) \cup [0,\min\limits_{k \in N^\prime_j} \alpha_{2k+1}-\pi)\}
\end{align*}
for all integers $j$. Here, $M_j,M^\prime_j,N_j,N^\prime_j$ denote the sets of all integers $k$ such that there exists a quadrilateral in $\mZ^2$ with labels $\alpha_{2k},\beta_{2j+1}$; $\alpha_{2k},\beta_{2j}$; $\alpha_{2k+1},\beta_{2j+1}$; $\alpha_{2k+1},\beta_{2j}$, respectively.

Note that regardless which $\alpha_j$ we choose above, $\beta_{2j}=0$ and $\beta_{2j+1}=\pi$ for all $j$ will always satisfy the last two conditions. The result can be easily extended to infinite sub-quad-graphs of $\mZ^2$, replacing minima and maxima by infima and suprema.

\begin{definition}
A \textit{spider-graph} is the (infinite) quad-graph that is constructed in the following way: Take infinitely many concentric regular $2n$-gons, $n \geq 2$, that are equally spaced. Then, add the radial edges between two successive polygons and divide the central polygon into quadrilaterals by adding $n-2$ parallel diagonals.
\end{definition}

An example of a spider-graph is given in Figure~\ref{fig:spider8} ($n=4$).

\begin{figure}[htbp]
\begin{center}
\beginpgfgraphicnamed{spider8}
\begin{tikzpicture}
[white/.style={circle,draw=black,fill=white,thin,inner sep=0pt,minimum size=1.2mm},
black/.style={circle,draw=black,fill=black,thin,inner sep=0pt,minimum size=1.2mm}]
\foreach \r in {1,...,4}
	\foreach \a in {22,67,...,337}
		{\pgfmathparse{int(\r+int(\a))};
		\ifthenelse{\isodd{\pgfmathresult}}{\node[black] (p_\r_\a)  at (canvas polar cs:angle=\a+0.5,radius=\r cm) {};}{\node[white] (p_\r_\a)  at (canvas polar cs:angle=\a+0.5,radius=\r cm) {};}
		}
\foreach \r in {1,...,4}
		{
		\draw (p_\r_22) -- node[midway,above right]{$\alpha_2$} (p_\r_67);
		\draw (p_\r_67) -- node[midway,above]{$\alpha_1$} (p_\r_112);
		\draw (p_\r_112) -- node[midway,above left]{$\alpha_2$} (p_\r_157);
		\draw (p_\r_157) -- node[midway,left]{$\alpha_3$} (p_\r_202);
		\draw (p_\r_202) -- node[midway,below left]{$\alpha_4$} (p_\r_247);
		\draw (p_\r_247) -- node[midway,below]{$\alpha_1$} (p_\r_292);
		\draw (p_\r_292) -- node[midway,below right]{$\alpha_4$} (p_\r_337);
		\draw (p_\r_337) -- node[midway,right]{$\alpha_3$} (p_\r_22);
		}
\foreach \r in {1,...,3}
	\foreach \a in {22,67,...,337}
	{\pgfmathparse{int(\r+1)}
	\draw (p_\r_\a) -- node[midway]{$\beta_\r$} (p_\pgfmathresult_\a);
	}
\foreach \a in {22,67,...,337}
	{\draw[dashed] (p_4_\a)--(canvas polar cs:angle=\a+0.5,radius=5cm);
	}
\draw (p_1_22) -- node[midway,above]{$\alpha_1$} (p_1_157);
\draw (p_1_202) -- node[midway,below]{$\alpha_1$} (p_1_337);
\end{tikzpicture}
\endpgfgraphicnamed
\caption{Spider-graph with central octagon}
\label{fig:spider8}
\end{center}
\end{figure}
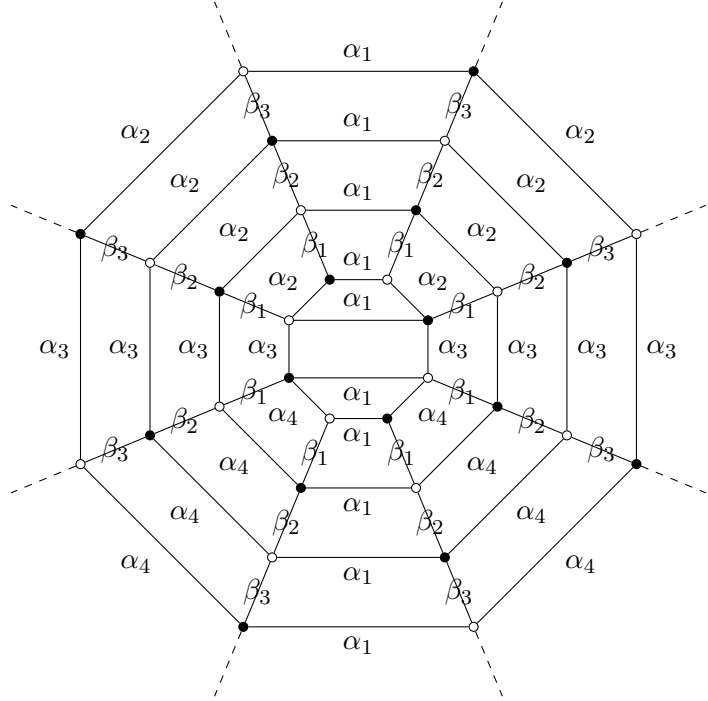

All labelings of the edges of a spider-graph that induce a convex labeling are given by the following (up to multiples of $2\pi$ and adding a fixed number to all labels):
\begin{align*}
\alpha_1 &=0,\\
\alpha_{2j+1} & \in (0,\pi),\\
\alpha_{2j} & \in (\pi,2\pi),\\
\beta_{2j-1} & \in (\max\{\max\limits_{k} \alpha_{2k}-\pi,\max\limits_{k} \alpha_{2k+1}\},\pi),\\
\beta_{2j} & \in (\max\{\max\limits_{k} \alpha_{2k},\max\limits_{k} \alpha_{2k+1}+\pi\},2\pi)
\end{align*}
for all nonnegative integers $j$.

\begin{definition}
A \textit{strip} in a planar quad-graph $\cG$ is a path on its dual such that two successive faces share an edge and the strip leaves a face in the opposite edge where it enters it. Moreover, strips are assumed to have maximal length, i.e., there are no strips containing it apart from itself.
\end{definition}

Rhombic planar quad-graphs are characterized by the following proposition of Kenyon and Schlenker \cite{KSch05}:

\begin{proposition}\label{prop:rhombic}
A planar quad-graph $\Lambda$ admits a combinatorially equivalent embedding in $\mC$ with all rhombic faces if and only if the following two conditions are satisfied:
\begin{itemize}
\item No strip crosses itself or is periodic.
\item Two distinct strips cross each other at most once.
\end{itemize}
\end{proposition}

As a consequence, spider-graphs possess no rhombic embedding into the complex plane. But rhombic-embeddable graphs are an interesting and important subclass of quad-graphs. For example, rhombic quad-graphs played an important role in the work \cite{BoMeSu05} of the first author, Mercat, and Suris on linear and nonlinear theories of discrete complex analysis.

Before we come to rhombic-embeddable quad-graphs, we will first prove a characterization of integrable labelings on \textit{simply-connected} quad-graphs. These are quad-graphs where the part of the surface that $F(\cG)$ covers is simply-connected. For this, we first note that any labeling of $E(\cG)$ induces not only a labeling on the edges of the white graph $\cG_W$, but also on $E(\cG_B)$. Indeed, the induced label $\theta_{e^*}$ on an edge $e^* \in E(\cG_B)$ is just defined as $\theta_{e^*}=-\theta_e$, where $\theta_e$ is the label induced on the dual edge $e \in E(\cG_W)$. Since opposite edges of any quadrilateral carry the same label, this definition is consistent with our notion of an induced labeling.

The following three propositions are inspired by the paper \cite{BoMeSu05} of the first author, Mercat, and Suris.

\begin{proposition}\label{prop:integrability}
Let $\cG$ be simply-connected. Suppose that a labeling of $E(\cG_W)$ with real numbers is given, where the labels $\theta$ are considered the same if they differ by a multiple of $2\pi$ only.

Then, this labeling is integrable if and only if the following two equations are satisfied for all inner white vertices $x$ and inner black vertices $x^*$:
\begin{align}
 \prod\limits_{e=(x,y_k)\in E(\cG_W)} \exp(i\theta_e) =1, \label{eq:vertex}\\
 \prod\limits_{e^*=(x^*,y_k^*)\in E(\cG_B)}\exp(i\theta_{e^*}) =1. \label{eq:face}
\end{align}
\end{proposition}
\begin{proof}
The forward implication is given by the multiplicative analog of (\ref{eq:zerosum}). For the backward implication, we notice that (\ref{eq:vertex}) allows us to integrate the labeling on the edges of the star of any white vertex. Now, we just have to check that this consistently defines a labeling on $E(\cG)$. This is the case if and only if the sum of $\theta_e$ for all white edges $e$ of a given cycle vanishes modulo $2\pi$. Since $\cG$ is simply-connected, this is true if it is true for any elementary cycle corresponding to the star of a black vertex. Using $\theta_{e^*}=-\theta_e$, we can replace the white edges by its dual black edges. Then, the sum of all $\theta_{e^*}$ around the black vertex vanishes if and only if Equation~(\ref{eq:face}) is satisfied.
\end{proof}

\begin{proposition}\label{prop:rhombic1}
If the white graph $\cG_W$ or the black graph $\cG_B$ is bipartite, then any (possibly ramified) rhombic embedding of $\cG$ into $\mC$ yields a labeling of edges of $\cG$ such that the induced labeling on $E(\cG_W)$ is convex. If $\cG_W$ is bipartite, the induced labels are given by the angles $\theta$ of the rhombi at the white vertices; if $\cG_B$ is bipartite, the induced labels are given by the angles $\theta^*=\pi - \theta$ of the rhombi at the black vertices.
\end{proposition}
\begin{proof}
Since $\cG_W$ (or $\cG_B$) is bipartite, we can choose a partition into two types of white (or black) vertices, say of type 1 and of type 2. We orient all edges of $\cG$ in such a way that they always start in a white (or black) and end in a black (or white) vertex, see Figure~\ref{fig:rhombic_bipartite}.

\begin{figure}[htbp]
   \centering
		\subfloat[$\cG_W$ bipartite]{
    \beginpgfgraphicnamed{white-bipartite}
			\begin{tikzpicture}
			[white1/.style={circle,draw=black,fill=white,thin,inner sep=0pt,minimum size=1.2mm},white2/.style={rectangle,draw=black,fill=white,thin,inner sep=0pt,minimum size=1.2mm},black/.style={circle,draw=black,fill=black,thin,inner sep=0pt,minimum size=1.2mm}]
		\node[white1] (w1) [label=above right:$\theta^*$]  at (-1,-1) {};
		\node[white2] (w2) [label=below left:$\theta^*$] at (1,1) {};
		\node[black] (b1) [label=above left:$\theta$] at (1,-1) {};
		\node[black] (b2) [label=below right:$\theta$] at (-1,1) {};
		\draw[decoration={markings,mark=at position 1 with {\arrow[scale=2]{>}}},
    postaction={decorate}, shorten >=0.4pt] (w1) -- node[midway,below] {$\exp(i\alpha)$} (b1);
		\draw[decoration={markings,mark=at position 1 with {\arrow[scale=2]{>}}},
    postaction={decorate}, shorten >=0.4pt] (w2) -- node[midway,right] {$-\exp(i\beta)$} (b1);
		\draw[decoration={markings,mark=at position 1 with {\arrow[scale=2]{>}}},
    postaction={decorate}, shorten >=0.4pt] (w1) -- node[midway,left] {$\exp(i\beta)$} (b2);
		\draw[decoration={markings,mark=at position 1 with {\arrow[scale=2]{>}}},
    postaction={decorate}, shorten >=0.4pt] (w2) -- node[midway,above] {$-\exp(i\alpha)$} (b2);
		\draw (-0.2,-1) arc (0:90:8mm);
		\draw (1,-0.2) arc (90:180:8mm);
		\draw (0.2,1) arc (180:270:8mm);
		\draw (-1,0.2) arc (270:360:8mm);
		\end{tikzpicture}
		\endpgfgraphicnamed}
		\qquad
    \subfloat[$\cG_B$ bipartite]{
		\beginpgfgraphicnamed{black-bipartite}
		\begin{tikzpicture}
		[white/.style={circle,draw=black,fill=white,thin,inner sep=0pt,minimum size=1.2mm},
black1/.style={circle,draw=black,fill=black,thin,inner sep=0pt,minimum size=1.2mm},black2/.style={rectangle,draw=black,fill=black,thin,inner sep=0pt,minimum size=1.2mm}]
		\node[white] (w1) [label=above right:$\theta^*$] at (-1,-1) {};
		\node[white] (w2) [label=below left:$\theta^*$] at (1,1) {};
		\node[black1] (b1) [label=above left:$\theta$] at (1,-1) {};
		\node[black2] (b2) [label=below right:$\theta$] at (-1,1) {};
		\draw[decoration={markings,mark=at position 1 with {\arrow[scale=2]{>}}},
    postaction={decorate}, shorten >=0.4pt] (b1) -- node[midway,below] {$\exp(i\alpha)$} (w1);
		\draw[decoration={markings,mark=at position 1 with {\arrow[scale=2]{>}}},
    postaction={decorate}, shorten >=0.4pt] (b1) -- node[midway,right] {$\exp(i\beta)$} (w2);
		\draw[decoration={markings,mark=at position 1 with {\arrow[scale=2]{>}}},
    postaction={decorate}, shorten >=0.4pt] (b2) -- node[midway,left] {$-\exp(i\beta)$} (w1);
		\draw[decoration={markings,mark=at position 1 with {\arrow[scale=2]{>}}},
    postaction={decorate}, shorten >=0.4pt] (b2) -- node[midway,above] {$-\exp(i\alpha)$} (w2);
		\draw (-0.2,-1) arc (0:90:8mm);
		\draw (1,-0.2) arc (90:180:8mm);
		\draw (0.2,1) arc (180:270:8mm);
		\draw (-1,0.2) arc (270:360:8mm);
		\end{tikzpicture}
		\endpgfgraphicnamed}
   \caption[]{Induced labeling of a rhombic embedding}
   \label{fig:rhombic_bipartite}
\end{figure}
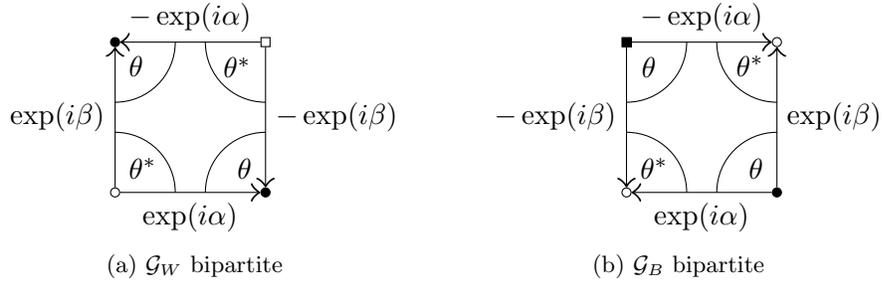

To each oriented edge $\vec{e}=(x,y)$ we associate the complex number \[\gamma(\vec{e}):=\frac{y-x}{\|y-x\|},\] where we now consider the vertices as points in $\mC$. If the white (or black) starting point of $e$ is of type 1, we define $\exp(i\alpha):=\gamma(\vec{e})$, otherwise $\exp(i\alpha):=-\gamma(\vec{e})$. By construction, the numbers $\alpha$ (that are unique up to multiples of $2\pi$) define labels on nonoriented edges of $\cG$ such that opposite edges of a quadrilateral receive the same label.

Now, the induced labels are given by $\alpha-\beta$. Clearly, $\alpha-\beta \equiv \theta \mod 2\pi$ (or $\alpha-\beta\equiv \theta^* \mod 2\pi$), where $\theta$ (or $\theta^*$) denotes the interior angle of the rhombus at the corresponding white (or black) vertex.
\end{proof}

\begin{remark}
If $\cG$ is simply-connected, $\cG_W$ (or $\cG_B$) is bipartite if and only if all inner black (or white) vertices are of even degree.
\end{remark}

For cellular decompositions of the disk, the converse of Proposition~\ref{prop:rhombic1} is true:

\begin{proposition}\label{prop:rhombic2}
Suppose that $\cG$ corresponds to a cellular decomposition of the disk and that $\cG_W$ or $\cG_B$ is bipartite. Then, any convex integrable labeling $\theta$ of $E(\cG_W)$ gives a (possibly ramified) rhombic embedding of $\cG$ into the complex plane.
\end{proposition}
\begin{proof}
Replacing $\theta$ by $-\theta$ if necessary, we can assume that $\exp(i\theta_e) \in S^1_+$ for all $e\in E(\cG_W)$. By assumption, we can choose labels $\alpha$ on the edges of $\cG$ that induce the given labeling on $E(\cG_W)$. Then, we can define $\gamma(\vec{e})$ through the labels $\alpha$ exactly as in the proof of Proposition~\ref{prop:rhombic1}. Choosing one starting point, the $\gamma(\vec{e})$ give us consecutively the positions of the vertices of $\cG$ in $\mC$. By construction, any quadrilateral is mapped to a rhombus, and by Proposition~\ref{prop:integrability} and the interpretation of $\theta$ in Proposition~\ref{prop:rhombic1}, the interior angles of the rhombi sum um to multiples of $2\pi$. Thus, we obtain a possibly ramified rhombic realization of $\cG$.
\end{proof}

The propositions above show that there is a large class of rhombic-embeddable quad-graphs for which convex integrable labelings exist. However, this is not true for all rhombic-embeddable quad-graphs.

\begin{proposition}\label{prop:rhombic-nonconvex}
There exist infinitely many rhombic-embeddable quad-graphs $\cG$ such that every induced labeling on $\cG_W$ is not convex.
\end{proposition}
\begin{proof}
It is easy to see that the graph shown in Figure~\ref{fig:rhombic-nonconvex} possesses a rhombic embedding according to Proposition~\ref{prop:rhombic} and that it can be easily extended to arbitrary large rhombic-embeddable quad-graphs. Let us show that every induced labeling on $E(\cG_W)$ is not convex.

\begin{figure}[htbp]
\begin{center}
\beginpgfgraphicnamed{rhombic-nonconvex}
\begin{tikzpicture}
[white/.style={circle,draw=black,fill=white,thin,inner sep=0pt,minimum size=1.2mm},
black/.style={circle,draw=black,fill=black,thin,inner sep=0pt,minimum size=1.2mm}]
\node[white] (w1)  at (0,-1) {};
\node[white] (w2)  at (1,0) {};
\node[white] (w3)  at (2,1) {};
\node[white] (w4)  at (0,3) {};
\node[white] (w5)  at (-3,4) {};
\node[white] (w6)  at (-3,2) {};
\node[white] (w7)  at (-4,1) {};
\node[white] (w8)  at (-2,-1) {};
\node[white] (w9)  at (-1,0) {};
\node[white] (w10)  at (0,1) {};
\node[white] (w11)  at (1,2) {};
\node[white] (w12)  at (-2,3) {};
\node[white] (w13)  at (-1,2) {};
\node[white] (w14)  at (-2,1) {};
\node[white] (w15)  at (-3,0) {};
\node[black] (b1)  at (1,-1) {};
\node[black] (b2)  at (2,0) {};
\node[black] (b3)  at (2,2) {};
\node[black] (b4)  at (-2,4) {};
\node[black] (b5)  at (-3,3) {};
\node[black] (b6)  at (-4,2) {};
\node[black] (b7)  at (-4,0) {};
\node[black] (b8)  at (-1,-1) {};
\node[black] (b9)  at (0,0) {};
\node[black] (b10)  at (1,1) {};
\node[black] (b11)  at (0,2) {};
\node[black] (b12)  at (-2,2) {};
\node[black] (b13)  at (-1,1) {};
\node[black] (b14)  at (-2,0) {};
\node[black] (b15)  at (-3,1) {};
\draw (w1) -- node[midway,below] {$\varepsilon$} (b1) -- (w2)-- node[near start,below] {$\delta$} (b2) -- (w3)-- node[midway,left] {$\alpha$} (b3) -- node[midway,above] {$\gamma$} (w4)-- (b4) -- node[near end,above] {$\beta$} (w5)-- node[midway,left] {$\delta$} (b5) -- (w6)-- node[near end,above] {$\varepsilon$} (b6) -- node[midway,left] {$\alpha$} (w7)-- (b7) -- node[midway,below] {$\beta$} (w8)-- (b8) -- node[midway,left] {$\varepsilon$} (w9)-- (b9) -- (w10)-- node[near start,below] {$\gamma$} (b10) -- node[midway,left] {$\alpha$} (w11)-- node[near end,below] {$\gamma$} (b11) -- (w12)-- (b12) -- (w13)-- node[midway,left] {$\alpha$} (b13) -- (w14)-- (b14) -- node[near end,above] {$\beta$} (w15)-- (b15) -- node[near start,above] {$\beta$} (w14)-- node[midway,left] {$\alpha$} (b12) -- node[near end,above] {$\beta$} (w6)-- node[midway,left] {$\alpha$} (b15) -- node[near end,above] {$\varepsilon$} (w7);
\draw (w9) -- node[midway,left] {} (b14)-- node[midway,left] {$\varepsilon$} (w8);
\draw (b8) -- node[midway,below] {$\gamma$} (w1);
\draw (b7) -- node[near start,above] {$\varepsilon$} (w15);
\draw (b3) -- node[near end,below] {$\delta$} (w11);
\draw (b11) -- node[midway,left] {$\delta$} (w4);
\draw (b5) -- node[near start,above] {$\beta$} (w12)-- node[midway,left] {$\delta$} (b4);
\draw (b9) -- node[near start,below] {$\gamma$} (w2)-- node[midway,left] {} (b10)-- node[near start,below] {$\delta$} (w3);
\draw (b1) -- node[near end,below] {$\gamma$} (w9)-- node[midway,left] {} (b13)-- node[midway,left] {} (w10)-- node[midway,left] {$\alpha$} (b11)-- node[midway,left] {} (w13);
\end{tikzpicture}
\endpgfgraphicnamed
\caption{rhombic-embeddable graph from Proposition~\ref{prop:rhombic-nonconvex}}
\label{fig:rhombic-nonconvex}
\end{center}
\end{figure}
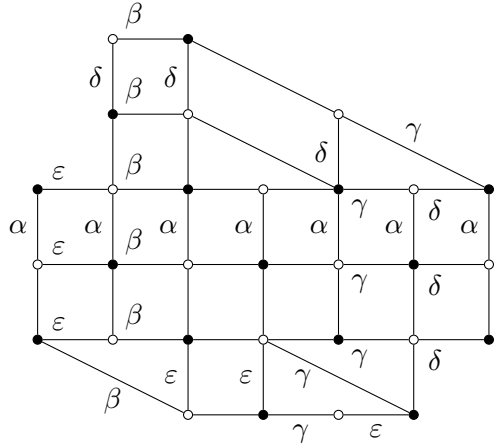

Assume the contrary and consider an appropriate labeling of $E(\cG)$. Again, we consider the labels modulo $2\pi$, such that we can assume that they are all in $[0,2\pi)$. Without loss of generality, we suppose that $\varepsilon=0$.

Due to convexity of the labeling, $\exp(i(\varepsilon-\mu)) \in S_+^1$ for all $\mu \in \{\alpha,\beta,\gamma\}$. Thus, $\alpha,\beta,\gamma \in (\pi,2 \pi)$. Moreover, $\exp(i(\alpha-\beta)),\exp(i(\gamma-\alpha)) \in S_+^1$, which gives $\gamma > \alpha > \beta$. Again by convexity of the labeling, the following restrictions for $\delta$ hold: $\exp(i(\alpha-\delta)),\exp(i(\delta-\beta)),\exp(i(\delta-\gamma)) \in S_+^1$. Using all the results we have obtained so far,
\begin{align*}
\delta & \in (\alpha-\pi,\alpha) \cap \{(\beta,2\pi) \cup [0,\beta-\pi)\} \cap \{(\gamma,2\pi) \cup [0,\gamma-\pi)\}\\
&=(\beta,\alpha) \cap \{(\gamma,2\pi) \cup [0,\gamma-\pi)\}\\
&=(\beta,\alpha) \cap (\gamma,2\pi)\\
&=\emptyset,
\end{align*}
contradiction.
\end{proof}

\begin{remark}
Note that the graph shown in Figure~\ref{fig:rhombic-nonconvex} is actually not that artificial. The proof why any labeling on its edges cannot induce a convex labeling mainly relies on the fact that there are a couple of triples of strips that are pairwise intersecting. The existence of such triples is reasonable to expect from a generic rhombic quad-graph. However, no three pairwise intersecting strips are all intersected by the same strip in the example above.
\end{remark}


\section{Integrable circle patterns and \texorpdfstring{$(\textnormal{Q}3)$}{(Q3)}}\label{sec:circle_patterns}

We conclude this paper by continuing our discussion of the analogies between the discrete Laplace-type equations corresponding to $(\textnormal{Q}3)$ and circle patterns as described by the first author and Springborn in \cite{BSp04}. Also, we comment on our and other notions of integrability.

\begin{figure}[htbp]
\begin{center}
  \beginpgfgraphicnamed{circle-intersection}
    \begin{tikzpicture}
    [white/.style={circle,draw=black,fill=white,thin,inner sep=0pt,minimum size=1.2mm},invisible/.style={circle,draw=white,fill=white,thin,inner sep=0pt,minimum size=1.2mm},black/.style={circle,draw=black,fill=black,thin,inner sep=0pt,minimum size=1.2mm}]
    \clip (-2.5,-3) rectangle (4.5,3);
    \node[invisible] (phi) [label=right:$\phi_e^x$]  at (-1.8,0.24) {};
    \node[invisible] (theta) [label=above:$\theta_e$]  at (0.1,-1.9) {};
    \node[white] (w1) [label=left:$x$]  at (-2,0) {};
    \node[white] (w2) [label=right:$y$] at (4,0) {};
    \node[black] (b1) at (0,-2) {};
    \node[black] (b2) at (0,2) {};
    \draw (b1) -- node[midway,below left] {$r_x$} (w1);
    \draw (b1) -- node[midway,below right] {$r_y$} (w2);
    \draw (b2) -- node[midway,above left] {$r_x$} (w1);
    \draw (b2) -- node[midway,above right] {$r_y$} (w2);
    \draw[dashed] (w1) -- node[midway,above] {$e$} (w2);
    \draw (-2,0) circle [radius=2.828cm];
    \draw (4,0) circle [radius=4.472cm];
    \draw (-1,0) arc (0:45:1cm);
    \clip (-2,0) -- (0,-2) -- (4,0) -- (-2,0);
    \draw (0,-2) circle [radius=0.7cm];
    \end{tikzpicture}
  \endpgfgraphicnamed
  \caption{Intersection of circles}
  \label{fig:circle-intersection}
\end{center}
\end{figure}
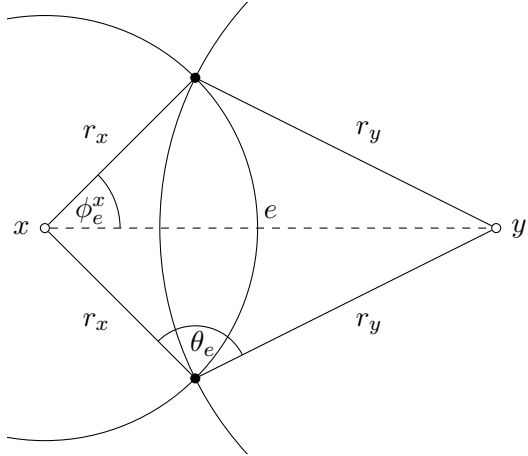

Figure~\ref{fig:circle-intersection} shows two circles centered at $x$ and $y$ with radii $r_x$ and $r_y$ that intersect at the angle $\theta_e$ (the circles being canonically oriented). Thus, the fact that the angles around an inner black vertex sum up to $2\pi$ (or a multiple of it, if we consider ramified patterns) corresponds to Equation~(\ref{eq:face}) in Proposition~\ref{prop:integrability}.

The first author and Springborn observed in \cite{BSp04} that elementary geometric calculations yield
\begin{equation}\label{eq:phi_eucl}
\exp(2i\phi_e^x)=\frac{r_x-r_y \exp(-i\theta_e)}{r_x-r_y \exp(i\theta_e)}
\end{equation}
in the Euclidean case, and
\begin{equation}
\exp(2i\phi_e^x)=\log\left(\frac{\tanh\left(\frac{r_x}{2}\right)-\tanh\left(\frac{r_y}{2}\right) \exp(-i\theta_e)}{\tanh\left(\frac{r_x}{2}\right)-\tanh\left(\frac{r_y}{2}\right) \exp(i\theta_e)}\right)-\log\left(\frac{1-\tanh\left(\frac{r_x}{2}\right)\tanh\left(\frac{r_y}{2}\right) \exp(-i\theta_e)}{1-\tanh\left(\frac{r_x}{2}\right)\tanh\left(\frac{r_y}{2}\right) \exp(i\theta_e)}\right) \label{eq:phi_hyp}
\end{equation}
in the hyperbolic case. Let us define
\begin{align}
 r_x=\exp(X),r_y=\exp(Y) &\text{ in the Euclidean case,} \label{eq:connection_eucl}\\
\tanh\left(\frac{r_x}{2}\right)=\exp(X),\tanh\left(\frac{r_y}{2}\right)=\exp(Y) &\text{ in the hyperbolic case,} \label{eq:connection_hyp}
\end{align}
$X,Y \in \mR$. This is possible because all radii are positive. Straightforward calculations show in both cases that
\begin{equation*}
2\phi_e^x=\varphi(X,Y;\theta_e),
\end{equation*}
where $\varphi$ is defined in (\ref{eq:phi_Q30}) for $(\textnormal{Q}3)_{\delta=0}$ and in (\ref{eq:phi_Q31}) for $(\textnormal{Q}3)_{\delta=1}$, corresponding to the Euclidean and hyperbolic case, respectively.

In Sections~\ref{sec:Q30} and~\ref{sec:Q31}, we have shown that in the case of $X\in \mR$ and $\theta_e \in (0,\pi)$ for all vertices $x \in V(\cG_W)$ and edges $e\in E(\cG_W)$ the terms $\Theta_x$ are positive, such that they can be seen as the cone angles at vertices $x\in V(\cG_W)$. Thus, the discrete Laplace-type Equations~(\ref{eq:Laplace}) of type $(\textnormal{Q}3)_{\delta=0}$ and $(\textnormal{Q}3)_{\delta=1}$ are identical to the variational formulation of Euclidean and hyperbolic circle patterns, respectively, given by the first author and Springborn in \cite{BSp04}.

Let us now come to integrability. By Proposition~\ref{prop:integrability}, the labeling $\theta$ of $E(\cG_W)$ is integrable if and only if 
\begin{equation}\label{eq:integrable_pattern1}
 \prod\limits_{e=(x,y_k)\in E(\cG_W)} \exp(i\theta_e) =1
\end{equation}
is true for all inner white vertices $x$, having in mind that Equation~(\ref{eq:face}) is satisfied since angles $\theta$ sum up to $2\pi$ around inner black vertices.

Assume now that the circle pattern can be partitioned into two sets of circles such that any two circles from one set have no points in common. Equivalently, $\cG_W$ is bipartite. If we denote the decomposition by $V(\cG_W)=V_1 \dot{\cup} V_2$, there is another way to relate the discrete Laplace-type equations corresponding to $(\textnormal{Q}3)$ with such circle patterns. For this, we use the other reality condition that differs to the previous ones by subtracting $\pi i$ to $X$ if $x \in V_2$. If we replace $X$ by $X-\pi i$ (or $Y$ by $Y-\pi i$) in (\ref{eq:connection_eucl}) or (\ref{eq:connection_hyp}), we have to replace $r_x$ by $-r_x$ (or $r_y$ by $-r_y$). Now,  $\exp(2i\phi_e^x)$ in Equations~(\ref{eq:phi_eucl}) and~(\ref{eq:phi_hyp}) then remains the same if we replace $\theta_e$ by $-\theta_e^*$.

In Sections~\ref{sec:Q30} and~\ref{sec:Q31}, we have explained why the variational formulation of the discrete Laplace-type equation for the modified reality condition is essentially the same as for the original one. Consequently, we obtain another interpretation of the discrete Laplace-type equations corresponding to $(\textnormal{Q}3)$, now with labels $-\theta^*$ instead of $\theta$, and circle patterns with intersection angles $\theta$.

This new labeling of $E(\cG_W)$ is then integrable if and only if
\begin{equation}\label{eq:integrable_pattern2}
 \prod\limits_{e=(x,y_k)\in E(\cG_W)} \exp(i\theta_e^*) =1
\end{equation}
is true for all inner black vertices $x$.

\begin{remark}
Equation~(\ref{eq:integrable_pattern1}) and (\ref{eq:integrable_pattern2}) are equivalent if and only if all vertices of $\cG_W$ have even degree. In the case that $\cG$ is simply-connected, we can equivalently state that the black graph $\cG_B$ is bipartite.
\end{remark}

If $\cG$ corresponds to a cellular decomposition of a disk, circle patterns satisfying Equation~(\ref{eq:integrable_pattern2}) for all inner black vertices $x$ are called \textit{integrable} in the paper \cite{BoMeSu05} of the first author, Mercat, and Suris. If and only if this equation is satisfied, the circle patterns admit isoradial realizations. Assuming that $\cG_W$ is bipartite (as it is necessary for the second reality condition), this statement is exactly the content of Propositions~\ref{prop:rhombic1} and~\ref{prop:rhombic2}.

Note that our definition of integrability of the corresponding labeling of $E(\cG_W)$ is denoted by \textit{integrability of the corresponding cross-ratio system} in \cite{BoMeSu05}, meaning that the system of Equations~(\ref{eq:Q}), where $Q$ is the cross-ratio equation $(\textnormal{Q}1)_{\delta=0}$, is three-dimensional consistent.

In their paper \cite{BoMeSu05}, the first author, Mercat, and Suris described how integrable circle patterns yield solutions of the corresponding cross-ratio system and under which conditions solutions of the cross-ratio system describe integrable circle patterns.


\section*{Acknowledgment}
\addcontentsline{toc}{section}{Acknowledgments}

We would like to thank V.V.~Bazhanov and Yu.~B.~Suris for numerous fruitful discussions.


\section*{Appendix: ABS list}
\addcontentsline{toc}{section}{Appendix: ABS list}

In the following, we will list the quad-equations of the ABS list that are of type Q and present the additive and multiplicative long leg functions $\varphi$ and $\Phi$, respectively.

In contrast to the original formulation of the equation $(\textnormal{Q}4)$ by the first author, Adler, and Suris in \cite{ABS03} that was based on the Weierstra{\ss} normalization of an elliptic curve, we use the Jacobian formulation that was first proposed by Hietarinta in \cite{Hi05}. The first author and Suris used the Jacobi normalization in the appendix of their paper \cite{BS10}. Our formulation follows their presentation, but differs slightly. We will indicate which (simple) transformations of variables and parameters relate our notation to the one in \cite{BS10}.

For parameters $\alpha,\beta$, let $\theta:=\alpha-\beta$. Note that the generalized action functionals of integrable quad-equations discussed in Section~\ref{sec:action} correspond to a general choice of parameters $\theta$.

{\bf List Q:}

\begin{itemize}
\item[] $(\textnormal{Q}1)_{\delta=0}$: $\quad
Q=\alpha(xu+yv)-\beta(xv+yu)-
  (\alpha-\beta)(xy+uv),$
  
\item[] $\qquad \qquad \:\quad i\varphi(x,y;\theta)=\frac{i\theta}{x-y}$.

\item[] Whereas the formulation of the quad-equation coincides with the one in \cite{BS10}, their long leg function \[i\varphi(x,y;\theta)=\frac{\theta}{x-y}\]
is related to ours by the transformation $\alpha \mapsto i\alpha$ for all parameters $\alpha$.

\item[]$(\textnormal{Q}1)_{\delta=1}$: $\quad
Q=\alpha(xu+yv)-\beta(xv+yu)-
  (\alpha-\beta)(xy+uv)-\alpha\beta(\alpha-\beta),$

\item[]
$\qquad \qquad \:\quad\Phi(x,y;\theta)=\frac{x-y+i\theta}{x-y-i\theta}.$

\item[] In \cite{BS10}, the term $+\alpha\beta(\alpha-\beta)$ instead of $-\alpha\beta(\alpha-\beta)$ appeared in the quad-equation, and their long leg function was given by \[\Phi(x,y;\theta)=\frac{x-y+\theta}{x-y-\theta}.\] The transformation $\alpha \mapsto i\alpha$ for all parameters $\alpha$ relates these two formulations.

\item[]$(\textnormal{Q}2)$:
 $\quad \ \quad Q=\alpha(xu+yv)-\beta(xv+yu)-(\alpha-\beta)(xy+uv)$

 $\quad \ \qquad\qquad  \quad -\alpha\beta(\alpha-\beta)(x+u+y+v)
           -\alpha\beta(\alpha-\beta)(\alpha^2-\alpha\beta+\beta^2),$

\item[] $\quad \: \qquad \qquad \: x=X^2$,

\item[]
$\quad \: \qquad \qquad \:\Phi(x,y;\theta)=\frac{(X-Y+i\theta)(X+Y+i\theta)}{(X-Y-i\theta)(X+Y-i\theta)}$.

\item[] As before, the transformation $\alpha \mapsto i\alpha$ for all parameters $\alpha$ relates the formulation in~\cite{BS10} to ours. There, $+\alpha\beta(\alpha-\beta)(x+u+y+v)$ instead of $-\alpha\beta(\alpha-\beta)(x+u+y+v)$ appeared in the quad-equation and their long leg function was \[\Phi(x,y;\theta)=\frac{(X-Y+\theta)(X+Y+\theta)}{(X-Y-\theta)(X+Y-\theta)}.\]

\item[]$(\textnormal{Q}3)_{\delta=0}$:
 $\quad Q=\sin(\alpha)(xu+yv)-\sin(\beta)(xv+yu)-\sin(\alpha-\beta)(xy+uv),$

\item[] $\qquad \qquad \:\quad x=\exp(X)$,

\item[]
$\qquad \qquad \:\quad\Phi(x,y;\theta)=\exp(-i\theta)\frac{\sinh\left(\frac{X-Y+i\theta}{2}\right)}
{\sinh\left(\frac{X-Y-i\theta}{2}\right)}$,

\item[]
$\qquad \qquad \:\quad\Phi'(x,y;\theta)=\frac{\sinh\left(\frac{X-Y+i\theta}{2}\right)}
{\sinh\left(\frac{X-Y-i\theta}{2}\right)}$.

\item[] Our quad-equation coincides with the one in \cite{BS10}, but they used the variable transformation $x=\exp(iX)$ instead. Now, the transformation $X \mapsto -iX$ for all new variables $X$ relates their long leg function given by \[\Phi(x,y;\theta)=\frac{\sin\left(\frac{X-Y+\theta}{2}\right)}{\sin\left(\frac{X-Y-\theta}{2}\right)}\] to ours if one does not take the factor $\exp(-i\theta)$ into account. However, multiplying the short leg functions by $\exp(-i\alpha)$ or $\exp(-i\beta)$, respectively, shows that these three-leg forms are equivalent.

\item[]$(\textnormal{Q}3)_{\delta=1}$:
 $\quad Q=\sin(\alpha)(xu+yv)-\sin(\beta)(xv+yu)
      -\sin(\alpha-\beta)(xy+uv)$

 $\quad \ \qquad\qquad  \quad+\sin(\alpha-\beta)\sin(\alpha)\sin(\beta),$

\item[] $\qquad \qquad \:\quad x=\cosh(X)$,

\item[]
$\qquad \qquad \:\quad\Phi(x,y;\theta)=\frac{\sinh\left(\frac{X-Y+i\theta}{2}\right)
\sinh\left(\frac{X+Y+i\theta}{2}\right)}{\sinh\left(\frac{X-Y-i\theta}{2}\right)
\sinh\left(\frac{X+Y-i\theta}{2}\right)}$.

\item[] Whereas our quad-equation is formulated in the same way as in \cite{BS10}, they used the variable transformation $x=\sin(X)$ instead. Their long leg function was given by \[\Phi(x,y;\theta)=\frac{\cos\left(\frac{X-Y+\theta}{2}\right)
\sin\left(\frac{X+Y+\theta}{2}\right)}{\cos\left(\frac{X-Y-\theta}{2}\right)
\sin\left(\frac{X+Y-\theta}{2}\right)};\] the transformation $X \mapsto \pi/2-iX$ for all new variables $X$ relates their long leg function to ours.

\item[]$(\textnormal{Q}4)$:
        $\quad \ \quad Q=\sn(\alpha)(xu+yv)-\sn(\beta)(xv+yu)
        -\sn(\alpha-\beta)(xy+uv)$

        $\quad \ \qquad\qquad  \quad+\sn(\alpha-\beta)\sn(\alpha)\sn(\beta)
        (1+\kappa^2xuyv),$

\item[] $\quad \: \qquad \qquad \: x=\sn(\pi/2-iX)$,

\item[] $\quad \: \qquad \qquad \:\Phi(x,y;\theta)=\frac
{\vartheta_1\left(\frac{X-Y+i\theta}{2i}\right)
\vartheta_1\left(\frac{X+Y+i\theta}{2i}\right)}
{\vartheta_1\left(\frac{X-Y-i\theta}{2i}\right)
\vartheta_1\left(\frac{X+Y-i\theta}{2i}\right)}$

$\quad \: \qquad \qquad \qquad \qquad \ \: =\frac
{\vartheta_1^*\left(\frac{X-Y+i\theta}{2i}\right)
\vartheta_4^*\left(\frac{X-Y+i\theta}{2i}\right)
\vartheta_1^*\left(\frac{X+Y+i\theta}{2i}\right)
\vartheta_4^*\left(\frac{X+Y+i\theta}{2i}\right)}
{\vartheta_1^*\left(\frac{X-Y-i\theta}{2i}\right)
\vartheta_4^*\left(\frac{X-Y-i\theta}{2i}\right)
\vartheta_1^*\left(\frac{X+Y-i\theta}{2i}\right)
\vartheta_4^*\left(\frac{X+Y-i\theta}{2i}\right)}.$

\item[] Here, $\vartheta_i^*$ is the Jacobi theta function whose half-period ratio is twice the ratio of $\vartheta_i$. The quad-equations are the same in both formulations, but $x=\sn(X)$ in~\cite{BS10}. The transformation $X \mapsto \pi/2-iX$ for all new variables $X$ relates the long leg function in \cite{BS10}, 
\[\Phi(x,y;\theta)=\frac
{\vartheta_2^*\left(\frac{X-Y+\theta}{2}\right)
\vartheta_3^*\left(\frac{X-Y+\theta}{2}\right)
\vartheta_1^*\left(\frac{X+Y+\theta}{2}\right)
\vartheta_4^*\left(\frac{X+Y+\theta}{2}\right)}
{\vartheta_2^*\left(\frac{X-Y-\theta}{2}\right)
\vartheta_3^*\left(\frac{X-Y-\theta}{2}\right)
\vartheta_1^*\left(\frac{X+Y-\theta}{2}\right)
\vartheta_4^*\left(\frac{X+Y-\theta}{2}\right)},\]
to ours.
\end{itemize}


\bibliographystyle{plain}
\bibliography{Discrete_integrable_eq_new}

\begin{thebibliography}{10}

\bibitem{ABS03}
V.E. Adler, A.I. Bobenko, and Yu.B. Suris.
\newblock Classification of integrable equations on quad-graphs. {T}he
  consistency approach.
\newblock {\em Commun. Math. Phys.}, 233(3):513--543, 2003.
\newblock Preprint: arXiv:nlin/0202024.

\bibitem{ABS09}
V.E. Adler, A.I. Bobenko, and Yu.B. Suris.
\newblock Discrete nonlinear hyperbolic equations. {C}lassification of
  integrable cases.
\newblock {\em Funct. Anal. Appl.}, 43(1):3--17, 2009.
\newblock Preprint: arXiv:0705.1663.

\bibitem{BaMSe07}
V.V. Bazhanov, V.V. Mangazeev, and S.M. Sergeev.
\newblock Faddeev-{V}olkov solution of the {Y}ang-{B}axter equation and
  discrete conformal symmetry.
\newblock {\em Nucl. Phys. B}, 784(3):234--258, 2007.
\newblock Preprint: arXiv:hep-th/0703041.

\bibitem{BaSe10}
V.V. Bazhanov and S.M. Sergeev.
\newblock A master solution of the quantum {Y}ang-{B}axter equation and
  classical discrete integrable equations.
\newblock Preprint: arXiv:1006.0651, 2010.

\bibitem{BaSe11}
V.V. Bazhanov and S.M. Sergeev.
\newblock Elliptic gamma-function and multi-spin solutions of the
  {Y}ang-{B}axter equation.
\newblock Preprint: arXiv: 1106.5874, 2011.

\bibitem{BoMeSu05}
A.I. Bobenko, C.~Mercat, and Yu.B. Suris.
\newblock Linear and nonlinear theories of discrete analytic functions.
  {I}ntegrable structure and isomonodromic {G}reen's function.
\newblock {\em J. Reine Angew. Math.}, 583:117--161, 2005.
\newblock Preprint: arXiv:math/0402097.

\bibitem{BSp04}
A.I. Bobenko and B.A. Springborn.
\newblock Variational principles for circle patterns and {K}oebe's theorem.
\newblock {\em Trans. Amer. Math. Soc.}, 356(2):659--689, 2004.
\newblock Preprint: arXiv:math/0203250.

\bibitem{BS02}
A.I. Bobenko and Yu.B. Suris.
\newblock Integrable systems on quad-graphs.
\newblock {\em Intern. Math. Research Notices}, 2002(11):573--611, 2002.
\newblock Preprint: arXiv:nlin/0110004.

\bibitem{BS08}
A.I. Bobenko and Yu.B. Suris.
\newblock {\em Discrete {D}ifferential geometry: {I}ntegrable {S}tructure},
  volume~98 of {\em Grad. Stud. Math.}
\newblock AMS, Providence, 2008.
\newblock Preprint: arXiv:math/0504358.

\bibitem{BS10}
A.I. Bobenko and Yu.B. Suris.
\newblock On the {L}agrangian structure of integrable quad-equations.
\newblock {\em Lett. Math. Phys.}, 92(3):17--31, 2010.
\newblock Preprint: arXiv:0912.2464.

\bibitem{BoSu14}
A.I. Bobenko and Yu.B. Suris.
\newblock Discrete pluriharmonic functions as solutions of linear
  pluri-{L}agrangian systems.
\newblock Preprint: arXiv: 1403.2876, 2014.

\bibitem{BPSu14}
R.~Boll, M.~Petrera, and Yu.B. Suris.
\newblock What is integrability of discrete variational systems?
\newblock {\em Proc. Royal Soc. A}, 470(2162), 2014.
\newblock 20130550 (15 pp). {P}reprint: arXiv:1307.0523.

\bibitem{Hi05}
J.~Hietarinta.
\newblock Searching for {CAC}-maps.
\newblock {\em J. Nonlin. Math. Phys.}, 12(Suppl. 2):223--230, 2005.

\bibitem{H00}
A.~Hurwitz.
\newblock {\em Vorlesungen \"uber allgemeine {F}unktionentheorie und
  elliptische {F}unktionen}.
\newblock Springer-Verlag, Berlin, 2000.
\newblock 5. Auflage.

\bibitem{KSch05}
R.~Kenyon and J.-M. Schlenker.
\newblock Rhombic embeddings of planar quad-graphs.
\newblock {\em Trans. Amer. Math. Soc.}, 357(9):3443--3458, 2005.
\newblock Preprint: arXiv:math-ph/0305057.

\bibitem{LN09}
S.~Lobb and F.W. Nijhoff.
\newblock Lagrangian multiforms and multidimensional consistency.
\newblock {\em J. Phys. A: Math. Theor.}, 42(45), 2009.
\newblock 454013 (18pp). {P}reprint: arXiv:0903.4086.

\bibitem{MaW97}
J.E. Marsden and J.M. Wendlandt.
\newblock Mechanical integrators derived from a discrete variational principle.
\newblock {\em Physica D}, 106(3-4):223--246, 1997.

\bibitem{N02}
F.W. Nijhoff.
\newblock Lax pair for the {A}dler (lattice {K}richever-{N}ovikov) system.
\newblock {\em Phys. Lett. A}, 297(1-2):49--58, 2002.
\newblock Preprint: arXiv:nlin/0110027.

\bibitem{Su13}
Yu.B. Suris.
\newblock Variational formulation of commuting {H}amiltonian flows: multi-time
  {L}agrangian 1-forms.
\newblock {\em J. Geom. Mechanics}, 5(3):365--379, 2013.
\newblock Preprint: arXiv:1212.3314.

\end{thebibliography}

\end{document}